\def\qed{\hbox{\rlap{$\sqcap$}$\sqcup$}}
\newtheorem{theorem}{Theorem}[section]
\newtheorem{lemma}{Lemma}[section]
\newtheorem{property}{Property}[section]
\newtheorem{cor}{Corollary}[theorem]
\newtheorem{definition}{Definition}[section]
\newenvironment{proof}{\par\noindent{\bf Proof:}}{\mbox{}\hfill$\qed$\\}
\newcounter{rem}
\newcommand{\ignore}[1]{ }
\def\PS{{\cal P}}
\def\cTray{{\cal T}_R}
\def\cE{{\cal E}}
\def\cC{{\cal C}}
\def\cK{{\cal K}}
\def\cT{{\cal T}_R}
\def\cRay{{\cal R}}
\def\seg{{\cal CS}}
\begin{document}

\title{\underbar \bf Approximate Shortest Path through a Weighted Planar Subdivision} 
\author{Rajasekhar Inkulu \thanks{Department of Computer Science, Indian Institute of Technology, Guwahati, India.  E-mail: rinkulu@iitg.ac.in} \and Sanjiv Kapoor \thanks{Department of Computer Science, Illinois Institute of Technology, Chicago, USA.  E-mail: kapoor@iit.edu}}
\date{}
\maketitle

\begin{abstract}
This paper presents an approximation algorithm for finding a shortest path between two points $s$ and $t$ in a weighted planar subdivision $\PS$.
Each face $f$ of $\PS$ is associated with a weight $w_f$, and the cost of travel along a line segment on $f$ is $w_f$ multiplied by the Euclidean norm of that line segment. 
The cost of a path which traverses across several faces of the subdivision is the sum of the costs of travel along each face.
Our algorithm progreeses the discretized shortest path wavefront from source $s$, and takes polynomial time in finding an $\epsilon$-approximate shortest path.
\end{abstract}

\section{Introduction}
\label{sect:weispintro}

Let $\PS$ be a planar subdivision, specified by $n$ vertices and $O(n)$ faces.
We assume, w.l.o.g., that the faces are triangles.
Each face $f$ has a weight, $\alpha_f \in {\textrm Z}^+$, associated with it.  
This weight specifies the cost per unit distance of traversing the  face $f$.  
For simplicity, we assume that the cost of traversing a  unit distance along an edge $e$, say $\alpha_e$, which bounds faces $f_1$ and $f_2$ is min\{$\alpha_{f_1}, \alpha_{f_2}$\}.  
The cost of a straight line path between two points $p$ and $q$ on face $f$ is $\alpha_f \Vert pq \Vert$.
And the weighted length of a path passing through a number of faces is the sum of the weighted lengths of its sub-paths through each face.
Let $s$ and $t$, respectively, be the source and destination points on $\PS$.
The weighted shortest path problem is to find a path having the minimum weighted length from $s$ to $t$.
The path we need to find is geodesic in the sense that it is locally optimal and cannot, therefore, be shortened by slight perturbations.  
An optimal path is a geodesic path that is globally optimal.
A corresponding problem can also be defined when the domain, $\PS$ is the surface of a polyhedron.
The applications of this problem are rich and varied, ranging from motion planning, to graphics, injection molding and computer-assisted surgery \cite{Mitchell91, Aleks05}.

While there has been previous work on this problem (see Table \ref{tab:timecomp}), a polynomial solution which tackled the problem in a systematic way was designed by Mitchell and Papadimitriou \cite{Mitchell91}.
Their algorithm uses Snell's Law and a continuous Dijkstra method to give an optimal-path map for any given source point $s$.  
Because of numerical precision issues, the algorithm \cite{Mitchell91} finds an $\epsilon$-approximation to an optimal path.
In this paper too, we consider $\epsilon$-approximate paths i.e., the weighted approximate shortest path from $s$ to $t$ is of length at most $(1+\epsilon)$ times the weighted distance between $s$ to $t$ of an optimal path.

Given the higher order polynomial solution for this problem, approximation algorithms are provided by Agarwal et al. \cite{Agarwal02}, Cheng S.-W. et al. \cite{Cheng08}, Aleksandrov et al. \cite{Aleks05},  and Reif et al. \cite{Reif00, Sun01} whose time complexities rely on geometric parameters that are outside the logarithm.
However, the time complexities of these algorithms are pseudo-polynomial (running time is polynomial in the numeric value of the input i.e., exponential in the length of the number of bits in the input).
The algorithm by \cite{Cheng08} has a smaller dependence on $n$ as compared with \cite{Mitchell91} but a higher dependece on $\epsilon$ and $\rho$ (defined in Table \ref{tab:timecomp}).
Previous algorithms by Lanthier et al. \cite{Lanthier01} and Aleksandrov et al. \cite{Aleks05} took a discretization approach by introducing Steiner points on each edge.
A discrete graph is constructed by adding edges connecting Steiner points in the same triangular region and an optimal path is computed in the resulting discrete graph using Dijkstra's algorithm.
To avoid  incurring  high time complexity, they have used a subgraph of the complete graph in each triangular region.

In this paper, we design an efficient polynomial time approximation algorithm for finding $\epsilon$-approximate paths in weighted planar subdivisions.
Previous schemes for this problem either discretize the space using Steiner vertices \cite{Aleks05} or partition the space around a point into cones \cite{Mata97} and process all the discrete objects created.
Our approach discretizes the wavefront in the continuous-Dijkstra's approach by initiating a set of rays from $s$, and the rays are generated uniformly around a source point.
Since the approach of maintaining the wavefront exactly is complicated, we do not use it in this paper.
Rather, in our approach, we utilize only a subset of rays to guide the progress of the wavefront as it expands from the source. 
\ignore{
These subset of rays are used to define the range of rays intersecting an edge of the polygonal domain, and are updated as the wavefront progresses.
when any ray $r$ is incident on an edge $e$ at an angle less than the critical angle corresponding to $e$, we progress $r$ according to Snell's law;
when any ray $r$ is incident on an edge $e$ at an angle greater than or equal to critical angle corresponding to $e$, rays are generated along the critical segment corresponding to that incidence.

To account for the divergence of the rays, we utilize another special set of rays, known as Steiner rays,  generated from critical points of entry on edges.
The Steiner rays are motivated as follows: A pair of adjacent rays when progressing along a sequence of weighted faces can diverge non-uniformly, especially at angles close to the critical reflection angle. 
To establish the approximation, the discretization must generate discrete rays with sufficient density in every part of the domain; thus we fill the gaps by generating the Steiner rays from these critical points, instead of generating a more denser set of rays from the source itself.
This helps in reducing the time complexity of discretization.
}
When two rays together yield an $\epsilon$-approximate shortest distance from $s$ to a vertex $v$, we initiate another discretized wavefront from $v$.
Thus our propagation processes discrete wavefronts from every vertex and can also be used to solve the all-pairs shortest path problem in the domain.
This method is also applicable in finding geodesic shortest path between two given points such that the shortest path lies on the surface of a 2-manifold whose faces are associated with positive weights.
Our solution uses several of the Lemmas and Theorems proved in Mitchell and Papadimitriou \cite{Mitchell91}, and extends their solution to obtain a better bound. 

The current best known algorithm \cite{Aleks05} to solve this problem requires $O(K\frac{n}{\sqrt{\epsilon}}\lg{(\frac{NW}{w})}$ $\lg{\frac{n}{\epsilon}}\lg{\frac{1}{\epsilon}})$ time, where $K$ captures geometric parameters and the weights of the faces.
This paper describes an algorithm with $O(n^6\lg(\frac{1}{\epsilon'})+n^4\lg(\frac{W}{w\epsilon'}))$ time complexity, where $\epsilon' = \frac{\epsilon}{n(n^2+LW)}$, $n$ is the number of vertices, $w$ and $W$ are the minimum and maximum weights among all the weights associated with the faces of $\PS$, and $L$ is the maximum length among all edge lengths of $\PS$.
The time complexity of our algorithm is better than the known polynomial time algorithms, and it does not contain terms that rely on geometric parameters outside the logarithm.
Table \ref{tab:timecomp} compares time complexity of our result with the previous work.
We assume that all parameters of the problem are specified by integers.
All the weights are assumed to be greater than or equal to one.
Note, that as established in \cite{Mitchell91}, the number of points required to define the interaction of the shortest path map with the domain is  $O(n^4)$.
Consequently, designing a polynomial time algorithm of complexity better than $O(n^4)$ appears challenging. 
The algorithm \cite{Mitchell91} is quadratic w.r.t. these events.
Our algorithm takes first steps to provide a sub-quadratic (as a function of the number of events) solution.
The presented technique is also applicable for the case of finding an $\epsilon$-approximate shortest path over the surface of a polyhedron with weighted faces.

\begin{table}
\label{tab:timecomp}
\centering
\begin{tabular}{|p{6.0cm}|p{9.0cm}|}
\hline

\hline
Algorithm & Time Complexity \\
\hline
\hline

This paper & $O(n^6\lg(\frac{1}{\epsilon'})+n^4\lg(\frac{W}{w\epsilon'}))$ \\ 

\hline

Mitchell and Papadimitriou \cite{Mitchell91} & $O(n^8\lg(\frac{nNW}{w\epsilon}))$ \\

\hline



Aleksandrov et al. \cite{Aleks05} & $O(\frac{nK}{\sqrt{\epsilon}}\lg(\frac{NW}{w})\lg{\frac{n}{\epsilon}}\lg{\frac{1}{\epsilon}})$ \\

\hline

Cheng et al. \cite{Cheng08} & $O(\frac{\rho \lg{\rho}}{\epsilon}n^3\lg{\frac{\rho n}{\epsilon}})$, where $\rho \in [1, W] \cup \{\infty\}$ \\

\hline

Reif and Sun \cite{Reif00} & $O(\frac{nN^2}{\epsilon}\lg(\frac{NW}{w})\lg{\frac{n}{\epsilon}}\lg{\frac{1}{\epsilon}})$ \\

\hline

Aleksandrov et al. \cite{Aleks00} & $O(\frac{nN^2}{\sqrt{\epsilon}}\lg(\frac{NW}{w})\lg{\frac{1}{\epsilon}}(\frac{1}{\sqrt{\epsilon}}+\lg{n}))$ \\

\hline

\end{tabular}
\caption{Comparison of Weighted Shortest Path Algorithms}
\end{table}

Section \ref{sect:weispdefsandprops} provides definitions and properties required for the algorithm.
Section \ref{sect:algodetails} lists the details of the algorithm with pseudo-code.
The conclusions are given in Section \ref{sect:weispconclu}.
Detailed proofs and more figures are mentioned in Appendix \ref{sect:appendix}.

\section{Definitions and Properties}
\label{sect:weispdefsandprops}

Consider a planar subdivision, $\PS$, with triangular faces. 
Two faces $f, f' \in \PS$ are defined to be {\it edge-adjacent} if they share a common edge $e$.

\def\cface{{\cal F}}
\def\eseq{{\cal E}}
\begin{definition}
A sequence of edge-adjacent faces $\cface = f_1,f_2,...,f_{k+1}$ is called a {\it face-sequence} if for every $i$ in $[1,k]$, $f_i$ is edge-adjacent to $f_{i+1}$, with edge $e_i$ common to both $f_i$ and $f_{i+1}$.
And, the  sequence $\eseq (\cface)= e_1,e_2,...,e_k$  is termed the {\it edge sequence} of $\cface$.
The {\it root of the face sequence $\cface$} is the vertex $v_1$, when $f_1 $ is specified by the triangle $(v_1, v_2 , v_3)$ where $e_1= (v_2,v_3)$.
\end{definition}

\begin{definition}
Given a face $f$ and one of its edges $e$, a {\it locally $f$-free path to $x \in e$} is defined to be a geodesic path $P$ from $s$ to $x$ such that  the path does not pass through $B(x, \delta) \cap f, \forall \delta >0$, where $B(x, \delta)$ is the open ball of radius $\delta$ centered at $x$.
\end{definition}
Intuitively, a locally $f$-free path to a point $x \in e$ strikes $x$ from the exterior of face $f$ and is locally optimal.  \hfil\break

\subsection*{Background results}
\noindent The following lemmas are from \cite{Mitchell91}:
\vspace*{-.10in}
\begin{lemma}
\label{lem:crit}
Let $P$ be a geodesic path on $\PS$.  
Consider any edge $e=f \cap f'$, where $f$ and $f'$ are two adjacent faces of $\PS$.  
Let $\alpha_e=min\{\alpha_f,\alpha_{f'}\}$.  If $P$ intersects the interior of $e$ at point $q$, known as crossing point, then $p$ satisfies Snell's Law satisfying $\alpha_f \sin{\theta} = \alpha_{f'} \sin{\theta'}$, where $\theta$ and $\theta'$ are the angles made by $P$ w.r.t. the normal at $q$ in the faces $f$ and $f'$, respectively.  
If $\alpha_e=\alpha_{f'}$ and $P$ shares a segment $(a, b)$ with $e$, then $P$ satisfies local criticality condition striking $e$ from the side of $f$ at $a$ with an angle of incidence $\theta_c(f, f')=\arcsin(\alpha_{f'}/\alpha_f)$ and exits at $b$ into face $f$ at an angle $-\theta_c(f, f')$.
The segment $(a, b)$ is known as critical segment.
The closer of the two points $\{a, b\}$ to $s$ is known as critical point of entry of path $P$ from face $f$ and the other is known as critical point of exit of path $P$ into face $f$.
\end{lemma}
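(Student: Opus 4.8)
The plan is to use the defining feature of a geodesic, namely that it is locally optimal, and to convert this into a first-variation (Fermat-type) condition at the crossing point, exactly as Snell's Law is derived from Fermat's principle in optics. I would work in local coordinates with the line containing $e$ as the horizontal axis and the normal to $e$ at the relevant point as the vertical axis. Writing the portion of $P$ adjacent to the edge as a segment from a fixed point $A \in f$ to the edge followed by a segment leaving the edge to a fixed point on the far side, the total weighted cost of this local piece becomes a smooth function of where along $e$ the transition occurs, and local optimality forces the derivative of this function to vanish.

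For the refraction part, let $q$ lie in the interior of $e$, let $x$ be the signed position of the crossing point along $e$, and let $A \in f$ and $B \in f'$ project onto the line of $e$ at positions $a_1, a_2$ with perpendicular distances $h_1, h_2$. The local weighted cost is
\[
L(x) = \alpha_f\sqrt{h_1^2 + (x-a_1)^2} + \alpha_{f'}\sqrt{h_2^2 + (x-a_2)^2}.
\]
Each summand is strictly convex in $x$, so $L$ is strictly convex and its unique minimizer is characterized by $L'(x)=0$, i.e.
\[
\alpha_f\frac{x-a_1}{\sqrt{h_1^2+(x-a_1)^2}} + \alpha_{f'}\frac{x-a_2}{\sqrt{h_2^2+(x-a_2)^2}} = 0.
\]
The first fraction is $\sin\theta$ and the second is $-\sin\theta'$ (the sines of the incidence and refraction angles measured from the normal, with opposite signs because $A$ lies behind the crossing and $B$ ahead of it along $e$), so the stationarity condition reads $\alpha_f\sin\theta = \alpha_{f'}\sin\theta'$, which is Snell's Law.

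For the critical-segment part, assume $\alpha_e = \alpha_{f'}$, so $\alpha_{f'} \le \alpha_f$, and suppose $P$ runs along $e$ over a segment $(a,b)$, arriving at $a$ from inside $f$ and leaving $b$ back into $f$. Here I would perturb the entry point $a$ alone: sliding $a$ a distance $ds$ along $e$ toward $b$ shortens the along-edge portion by exactly $ds$, contributing $-\alpha_{f'}\,ds$ to the cost (the edge is travelled at rate $\alpha_e=\alpha_{f'}$), while the incoming segment in $f$ changes to first order by $+\alpha_f\sin\theta\,ds$, where $\theta$ is the angle of incidence at $a$ from the normal. Stationarity forces $\alpha_f\sin\theta - \alpha_{f'}=0$, i.e. $\sin\theta = \alpha_{f'}/\alpha_f$ and hence $\theta = \theta_c(f,f') = \arcsin(\alpha_{f'}/\alpha_f)$. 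The symmetric first-order condition at $b$, applied to the outgoing segment, yields an exit angle of the same magnitude, recorded as $-\theta_c(f,f')$ by the sign convention marking the re-entry side of the normal.

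The main obstacle I anticipate is not the calculus but justifying, in the critical case, that the first-order conditions are both necessary and that travel along $e$ is genuinely forced, so that $\theta_c$ is the correct (and unique) incidence rather than an artifact. Unlike the interior crossing, this is a one-sided configuration, and I must rule out competing local behaviors, namely refracting through $f'$ instead of hugging $e$, or leaving $e$ prematurely. Concretely this reduces to the total-internal-reflection observation: once $\sin\theta > \alpha_{f'}/\alpha_f$ the Snell relation $\sin\theta' = (\alpha_f/\alpha_{f'})\sin\theta$ has no solution, so no refracted continuation into $f'$ exists and travel along the cheaper edge becomes mandatory, with $\theta_c$ the unique admissible incidence. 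Some additional care is needed to keep every perturbation feasible, i.e. to ensure the perturbed crossing or entry point stays in the relative interior of $e$; this is precisely where the hypothesis that $P$ meets $\mathrm{int}(e)$ is invoked.
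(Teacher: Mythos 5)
This lemma is one the paper explicitly imports as a background result from Mitchell and Papadimitriou \cite{Mitchell91}; the paper itself contains no proof of it, so there is no internal argument to compare against. Your first-variation (Fermat-principle) derivation is the standard one and is essentially how the cited reference establishes the result: for the interior crossing, the two-media cost $L(x)$ is strictly convex and the hypothesis that $q$ lies in the interior of $e$ makes two-sided perturbations feasible, so $L'(x)=0$ is genuinely necessary and yields $\alpha_f\sin\theta=\alpha_{f'}\sin\theta'$; for the critical segment, sliding the entry point $a$ (and symmetrically the exit point $b$) along $e$ correctly trades $-\alpha_{f'}\,ds$ of edge cost against $+\alpha_f\sin\theta\,ds$ of in-face cost, forcing $\sin\theta=\alpha_{f'}/\alpha_f$, i.e. $\theta=\theta_c(f,f')$, with the stated sign convention at $b$. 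Your closing discussion of total internal reflection is not actually needed for the lemma as stated, since the lemma only asserts necessary conditions conditioned on $P$ sharing the segment $(a,b)$ with $e$, but it is a correct observation and does no harm. In short: the proposal is a correct, self-contained proof of a statement the paper cites rather than proves.
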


\begin{lemma}
\label{lem:onecritinbetween}
The general form of weighted geodesic path is a simple (not self-intersecting) piecewise-linear path which goes through an alternating sequence of vertices and edge sequences such that 
the path along any edge sequence obeys Snell's Law at each crossing point and obeys local criticality condition at each critical segment.  
Further, between any critical point of exit and the next critical point of 
entry along the path, there must be a vertex.
\end{lemma}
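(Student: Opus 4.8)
The plan is to verify the structural properties in increasing order of difficulty, reserving the separation-by-a-vertex claim for last. First I would dispose of the local properties. Within any single face $f$ the cost of a subpath is $\alpha_f$ times its Euclidean length, so a locally optimal subpath confined to $f$ must be a straight segment; hence $P$ is piecewise-linear, breaking only where it meets an edge or a vertex. The behaviour at each edge crossing (Snell's Law) and at each critical segment (entry and exit at $\pm\theta_c(f,f')$, with the local criticality condition) is exactly the content of Lemma~\ref{lem:crit}, so these two clauses require no new work. The ``alternating sequence of vertices and edge sequences'' is then bookkeeping: call a maximal run of consecutive edge crossings and critical segments carrying no interior vertex an edge sequence, and note that two such runs are separated only by a vertex, which gives the alternation.

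For simplicity I would argue by loop removal. If $P$ visited an interior point $q$ twice, the portion of $P$ between the two visits is a closed loop of strictly positive weighted length (all weights are at least one), and deleting it yields a path with the same endpoints and strictly smaller weighted length; a transversal self-crossing is removed by the same uncrossing applied locally. Either way this contradicts the (local) optimality of a geodesic, so $P$ is simple.

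The heart of the lemma is the final clause, and this is where I expect the real difficulty. I would recast it in terms of the propagating wavefront, the set of points at a fixed weighted distance from $s$ together with its family of shortest-path rays. Two facts drive the argument. First, the rays shed by a critical segment are \emph{parallel}: along a critical segment on $e=f\cap f'$ the path re-enters $f$ everywhere at the single fixed angle $\theta_c(f,f')=\arcsin(\alpha_{f'}/\alpha_f)$, so the bundle of shortest paths leaving the segment is a planar (parallel-ray) wavefront. Second, Snell's Law preserves parallelism: a parallel bundle meeting a straight edge has a common incidence angle, hence by $\alpha_1\sin\theta_1=\alpha_2\sin\theta_2$ a common refraction angle, and so refracts to another parallel bundle (and likewise under reflection). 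Thus the wavefront emanating from a critical point of exit stays planar as it crosses faces, as long as it meets no vertex.

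Finally I would show that a critical point of \emph{entry} cannot be seeded by a planar wavefront. A new critical segment begins at an interior point $a$ of an edge $e'$ precisely where the incident wavefront first attains the critical angle at $e'$ while the grazing head-wave continuation along $e'$ becomes the first arrival; this forces the incidence angle to vary along $e'$ and to cross $\theta_c$ transversally at the single point $a$. A planar bundle meets $e'$ at one constant incidence angle, so it either misses the critical angle everywhere (no critical segment) or attains it along all of $e'$ at once (a degenerate coincidence with no distinguished interior entry point). Hence a genuine critical point of entry requires a \emph{diverging} wavefront, and divergence is introduced only where the wavefront is re-emitted radially, that is, at a vertex or the source. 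Therefore between a critical point of exit and the next critical point of entry the path must pass through a vertex. The main obstacle is making this last step rigorous: one must rule out the tangential coincidence in which a planar wavefront grazes $e'$ at exactly $\theta_c$, and confirm that the onset of the head wave genuinely demands a transverse crossing of the critical angle. A direct first-variation attempt is inconclusive here, since the criticality conditions make the total length stationary for every simultaneous slide of the two critical segments (a zero mode), which is why I would rely on the wavefront/ray-bundle formulation rather than on a perturbation of the single optimal path.
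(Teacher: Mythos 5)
First, a point of comparison: the paper contains no proof of this lemma at all; it is imported verbatim (with Lemmas \ref{lem:crit}, \ref{lem:edgeseqlen}, \ref{lem:critsources}) from Mitchell and Papadimitriou \cite{Mitchell91}, so your attempt has to be judged against the argument that source requires. The first half of your proposal is fine and essentially forced: straightness inside a face, Snell's law at crossings and local criticality at critical segments are exactly the content of Lemma \ref{lem:crit}, and the alternation claim is bookkeeping. (Your loop-removal argument for simplicity carries the usual caveat that deleting a loop is not a small perturbation, so as stated it proves simplicity of globally shortest paths rather than of arbitrary locally optimal ones; this same local-versus-global slippage is what matters below.)

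The genuine gap is in the final clause, and it sits precisely at the point you yourself flag as ``the main obstacle'': the tangential coincidence cannot be ruled out, because it genuinely occurs, and when it occurs the clause -- read as a statement about one individual geodesic path -- fails. Concretely, let $e$ and $e'$ be parallel edges bounding a face of weight $2$, with weight-$1$ faces on their far sides, so both critical angles equal $30^\circ$. The path that runs along $e$, exits at $30^\circ$, crosses the middle face, strikes $e'$ at exactly $30^\circ$ and runs along $e'$ obeys Snell's law and criticality everywhere and is locally optimal: the weighted length of this combinatorial type is a convex function of the breakpoints, its first variation vanishes, and sliding the crossing segment parallel to itself is an exact zero mode, because the criticality relation $\alpha_f\sin\theta_c=\alpha_e$ at both ends makes trading edge-travel for interior-travel length-neutral. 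Hence no argument, wavefront-based or otherwise, can show this configuration does not exist; your plan to prove that a planar bundle can never seed a critical entry aims at a stronger statement that is false. What the lemma actually asserts is a \emph{general form} (canonical representative) statement, and the proof has to be an exchange/sliding argument: slide such a path at constant length until one of the two critical segments degenerates to a single critical-reflection point; that corner admits a strict shortcut inside the expensive face (cost $2\alpha_f\ell$ replaced by $2\alpha_f\ell\sin\theta_c$), contradicting optimality -- unless the slide is first blocked by a vertex of $\PS$, which is exactly the vertex the lemma promises. A secondary flaw: your argument leans on first-arrival (``head wave'') semantics of the wavefront from $s$, but a geodesic path need not be a ray of any first-arrival wavefront, so those semantics are unavailable; the parallel-bundle geometry you set up is the right tool, but it must be deployed inside the sliding argument, not inside a non-existence argument.
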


\begin{lemma}
\label{lem:edgeseqlen}
Let $p$ be a shortest locally $f$-$free path$.  
Let $p'$ be a sub-path of $p$ such that $p'$ goes through no vertices or critical points.  
Then, $p'$ can cross an edge $e$ at most $O(n)$ times.  
Thus, in particular, the last edge sequence of $p$ contains each edge $O(n)$ times, 
implying a bound of $O(n^2)$ on its length.
\end{lemma}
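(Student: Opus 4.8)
The plan is to bound the number of times the vertex-free, critical-point-free sub-path $p'$ can cross a single fixed edge $e$, and then to multiply by the number of distinct edges. Throughout I use that, by Lemma~\ref{lem:onecritinbetween} together with the hypotheses on $p'$, the sub-path is a \emph{simple} piecewise-linear curve whose only bends occur at Snell's-law crossing points; in particular $p'$ contains no critical segment and touches no vertex, so every point where it meets $e$ lies in the relative interior of $e$.

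First I would set up the combinatorial structure of the crossings. Let $f_1,f_2$ be the two faces incident to $e$, and let $q_1,q_2,\dots,q_m$ be the points at which $p'$ meets the interior of $e$, indexed in the order in which $p'$ visits them. Since consecutive crossings of $e$ necessarily alternate the side from which $e$ is struck (a crossing from $f_1$ into $f_2$ is followed, the next time $e$ is met, by a crossing from $f_2$ into $f_1$, and vice versa), it suffices to bound the number of crossings made in one fixed direction, say from $f_1$ into $f_2$. Let these be $q_{i_1},q_{i_2},\dots$ in path order.

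Next I would exploit simplicity through a Jordan-curve argument. Take two same-direction crossings $q_a$ and $q_b$ that are consecutive among the $f_1\!\to\! f_2$ crossings. The arc of $p'$ between $q_a$ and $q_b$, closed off by the segment of $e$ joining them, is a simple closed curve $\gamma$ bounding a region $R$; simplicity of $p'$ and the choice of consecutive same-direction crossings guarantee that $\gamma$ is non-self-intersecting and meets $e$ only at its endpoints. The heart of the argument is the claim that $R$ must contain a vertex of $\PS$ that lies inside no smaller nested same-direction excursion. Suppose the annular region between $R$ and the next smaller nested excursion contained no vertex: then its faces have fixed weights, the bounding arc of $p'$ is a locally optimal path joining two points of $e$ while detouring into this vertex-free pocket, and using the refraction relation $\alpha_f\sin\theta=\alpha_{f'}\sin\theta'$ of Lemma~\ref{lem:crit} one shows the detour can be strictly shortened by pushing it toward the chord of $e$, contradicting that $p'$ is a sub-path of a shortest locally $f$-free path. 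Hence each same-direction excursion owns a \emph{private} vertex.

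Finally I would convert this into a count. Because $p'$ is simple, the regions $R$ attached to distinct consecutive same-direction crossings are either nested or interior-disjoint, so they form a laminar family; since each member owns a private vertex and $\PS$ has only $n$ vertices, there are $O(n)$ same-direction excursions, whence $m=O(n)$ crossings of $e$. Summing this per-edge bound over the $O(n)$ edges crossed by the last edge sequence gives total length $O(n^2)$, as required. I expect the main obstacle to be the shortening step inside a vertex-free pocket: unlike the unweighted case, the geodesic bends at every edge by Snell's Law even in the absence of obstacles, so "pulling the string taut" must be justified directly from the refraction relation and the convexity it induces rather than from obstacle avoidance; making the laminar charging rigorous (in particular ruling out deeply nested excursions that repeatedly surround the same vertex) is the second delicate point, and it is exactly the "private vertex" refinement above that I would need to nail down.
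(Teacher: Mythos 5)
First, a point of reference: the paper never proves this lemma at all --- it appears under ``Background results'' and is quoted verbatim from Mitchell and Papadimitriou \cite{Mitchell91}, so the only proof to compare against is theirs. Your skeleton (Jordan regions between crossings of $e$, charging each region to a vertex, laminarity from simplicity of $p'$, and the ``private vertex'' refinement to kill chains of nested regions, then summing over $O(n)$ edges) is the right one, and your closing paragraph correctly identifies where the difficulty lies. But as written there are two genuine gaps. The first is that your Jordan-curve setup is internally inconsistent: you first assert that consecutive crossings of $e$ alternate direction, and then assert that the arc of $p'$ between two crossings that are \emph{consecutive among the $f_1 \to f_2$ crossings} meets $e$ only at its endpoints. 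These two statements contradict each other: by your own alternation claim there is an $f_2 \to f_1$ crossing strictly between $q_a$ and $q_b$, so the arc does meet the interior of $e$, and $\gamma$ need not even be a simple closed curve. (The alternation claim is itself false in general: $p'$ avoids vertices but may wrap around an endpoint of $e$ and strike $e$ twice from the same side. The standard construction therefore takes \emph{consecutive crossings regardless of direction}; the bounded regions may then contain endpoints of $e$, which only helps the charging.)

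The second, more serious gap is the mechanism you propose for the central claim. You assert that if a pocket (or the annulus between nested pockets) is vertex-free, then ``using the refraction relation \ldots one shows the detour can be strictly shortened by pushing it toward the chord of $e$.'' In a weighted subdivision this shortening step is not valid: a vertex-free pocket can be traversed by chords of faces whose weight is strictly smaller than $\alpha_e$, and the excursion may exist precisely because the interior of the pocket is cheap; pushing the detour toward $e$ can then strictly \emph{increase} weighted length, so no contradiction with the optimality of $p$ arises this way. The argument that actually closes this step is topological rather than variational. In a vertex-free pocket no edge can cross $e$ or another edge, and no edge can terminate in the pocket's interior, so every edge meets the pocket in chords both of whose endpoints lie on the arc of $p'$; each such chord, together with the sub-arc of $p'$ joining its endpoints, bounds a strictly smaller vertex-free lens, and an innermost such lens forces $p'$ to cross some edge twice in succession while crossing no edge in between --- impossible, because between edge crossings a Snell path is a straight segment, and a straight segment with both endpoints on an edge lies along that edge and bounds no region. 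Until the ``push toward $e$'' step is replaced by an argument of this kind --- both for minimal pockets and for the chain annuli --- the private-vertex charging, and hence the $O(n)$ bound per edge, is unsupported.
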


\begin{lemma}
\label{lem:critsources}
There are at most $O(n)$ critical points of entry on any given edge $e$.
\end{lemma}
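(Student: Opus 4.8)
The plan is to fix the edge $e$ and the adjacent face from which critical incidences are taken, and to exploit the fact that a critical incidence is geometrically rigid. Let $f,f'$ be the two faces sharing $e$, with $\alpha_{f'}=\alpha_e\le\alpha_f$, so that by Lemma~\ref{lem:crit} a critical segment along $e$ is always entered from the heavier face $f$ at the single fixed angle $\theta_c(f,f')=\arcsin(\alpha_{f'}/\alpha_f)$. The first thing I would record is that the incoming leg (inside $f$) of every geodesic that produces a critical point of entry on $e$ from $f$ is parallel to every other such leg, since all of them strike the line supporting $e$ at the same angle $\theta_c$ and from the same side. Hence these final legs form a single parallel beam inside the triangle $f$.

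Next I would trace this beam backward across the subdivision. Two parallel rays crossing a common edge meet it at equal angles of incidence, so by Snell's Law (Lemma~\ref{lem:crit}) they leave at equal angles of refraction; therefore the beam stays a parallel family within each face it traverses and refracts coherently across each edge. The combinatorial type of a ray---the edge sequence it follows backward---can change only when some ray of the beam passes exactly through a vertex of $\PS$, as this is the only way two neighbouring rays can be routed onto different edges of the current triangle. Moreover, over a maximal sub-range of the beam that shares one combinatorial type, the angle at which the corresponding geodesic meets $e$ is a strictly monotone function of the incidence point $x$ as $x$ slides along $e$ (in the unfolded picture the straight segment issued from the fixed unfolded root rotates monotonically), so the critical value $\theta_c$ is attained for at most one $x$. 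Consequently the number of critical points of entry on $e$ is at most one more than the number of vertex-induced splittings of the beam.

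It then remains to bound the number of splitting events by $O(n)$, and this is the step I expect to be the main obstacle. Since all the geodesics in question emanate from the single source $s$, their backward traces form a non-crossing (laminar) family, so the beam behaves like a tree whose leaves are the critical points of entry on $e$ and whose internal splits occur at vertices of $\PS$; by Lemma~\ref{lem:onecritinbetween} each such entry is preceded, back to its most recent vertex, by a critical-free and vertex-free edge sequence, which by Lemma~\ref{lem:edgeseqlen} crosses any fixed edge only $O(n)$ times. The delicate point is that the backward beam may re-enter faces and approach a vertex along several distinct edge sequences, so a naive ``$n$ vertices'' count is not enough; the argument must charge splittings to vertex--corner incidences and use planarity together with the $O(n)$ crossing bound of Lemma~\ref{lem:edgeseqlen} to show that no vertex can spawn more than a bounded number of independent sub-beams, which then yields the $O(n)$ bound on the critical points of entry on $e$.
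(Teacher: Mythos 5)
The paper itself offers no proof of this lemma: it is imported verbatim as a background result from \cite{Mitchell91}, so your attempt has to stand on its own. Its skeleton is sensible --- every critical entry on $e$ from the heavier face $f$ arrives at the single angle $\theta_c(f,f')$ (strictly, at $\pm\theta_c$, one parallel beam per direction of travel along $e$, a harmless factor of two), backward traces refract coherently and stay parallel within a fixed edge sequence, splits can occur only at vertices, and within one combinatorial type at most one backward ray can reach any fixed root. One technical correction, though: your justification of that last fact by ``unfolding'' so that ``the straight segment issued from the fixed unfolded root rotates monotonically'' is not available here. Unfolding straightens geodesics only in unweighted polyhedral settings; a Snell refraction path is not straight when unfolded. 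The fact itself survives, but for a different reason: two backward rays leaving distinct points of $e$ at the same angle $\theta_c$ and traversing the same edge sequence are parallel throughout, so at most one of them can pass through the fixed root point (this is also the uniqueness underlying \emph{Find-Hit-At-Angle} in \cite{Mitchell91}).

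The genuine gap is the counting step, which you yourself flag as ``the main obstacle'': showing that the beam splits only $O(n)$ times, i.e.\ that no vertex spawns more than $O(1)$ sub-beams. That is not a technicality --- it is the entire content of the lemma --- and none of the tools you invoke delivers it. In particular, Lemma~\ref{lem:edgeseqlen} cannot be applied to your beam: it bounds edge crossings only for subpaths of shortest locally $f$-free paths, whereas your beam consists of backward Snell traces from \emph{every} point of $e$, most of which are not suffixes of any geodesic; such a beam can re-enter the same face and split at the same vertex repeatedly, and a priori its tracing need not even terminate. The way to close the gap is to count only entries realized by shortest paths from $s$: order them $x_1<\dots<x_k$ along $e$; use that shortest paths from a common source can be taken non-crossing, so the regions bounded by consecutive paths together with $[x_i,x_{i+1}]$ are internally disjoint; each consecutive pair has parallel final legs, hence must diverge going backward, and the first divergence traps a vertex of $\PS$ (or a root vertex) inside the corresponding region; disjointness then forces these $k-1$ charged vertices to be distinct, giving $k = O(n)$. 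You gesture at exactly this (laminarity plus charging to ``vertex--corner incidences'') but never establish the disjointness that makes each vertex absorb $O(1)$ charges, so as written the proposal does not prove the lemma.
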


\subsection*{Our Framework}

The algorithm uses rays to span the space $\PS$.
We denote a ray by the ordered pair $r= (u, v)$, where 
$u$ is the starting point of the ray
and $v$ is a  second point that defines the ray.
For a ray $r= (p, w)$, the point $p$ is termed as  $origin(r)$ and $r$ is
said to have {\it originated} from $p$.
Let  $\cRay(p)$ denote the set of rays that are originated from a point $p$.

Our algorithm simulates a discretized version of a continuous wavefront
approach. 
We  discretize the wavefront using a set of rays $\cRay (s )$ that start
from the source and fan out in a cone around the source.
When a shortest path to a vertex is determined, rays are also initialized 
from that vertex, and create the set $\cRay(v)$.
Each ray when it strikes an edge of a face of $\PS$, either refracts at the
edge or critically reflects.
As stated in Lemma~\ref{lem:crit}, the portion of the polygonal edge that 
critically reflects the rays (approximating the wavefront) 
has a {\em  critical segment} with a {\it critical point of entry}.
Consider a ray $r_c \in \cRay(v)$ that traverses over face $f$ and strikes
an edge $e = f \cap f'$ at a critical angle $\theta(f, f')$.
Note that the set of rays that strike at an angle greater than that angle
are not part of 
any shortest paths. 
A shortest path that uses $r_c$ may exit edge $e$ back onto face $f$ 
emanating from $e$ at the same critial angle $\theta(f,f')$.
We term the point, $p(e_c)$, at which $r_c$ strikes $e$ as the {\em critical
point of entry}.
We also term the entire segment of $e$, from $p$ to an endpoint of $e$
which contains points from where the shortest path comprising $r_c$ may
reflect back,
as a {\em critical edge segment}.
from
The sets comprising the vertices of $\PS$, critical points of entries, and 
critical edge segments are denoted by ${\cal V}, {\cal C}$, and ${\cal K}$
respectively.

Importantly, note that not all the rays $\cRay (s)$ defined above are 
generated during the algorithm; when a ray $r \in \cRay (s)$ is actually 
generated or  processed, then the ray $r$ is termed {\it traced}.

To achieve an $\epsilon$-approximation, 
apart from the rays that have originated from vertices and by critical
reflections from critical edge segments, 
the algorithm also introduces additional rays to fill in the gap
created by the critical reflection.
These edges atre generated  uniformly 
spaced in a cone with the critical point of entry as the apex of the cone.
For a source $v$, consider the rays from $v$ that strike an edge $e=(u,w)$
from
face $f$ at a point $p$
and let $r$ be the last ray that refracts onto face $f'$. 
Further let $r'$ be the first
ray that critically reflects and lies on $(p,w)$. 
Let $\theta$ be the angle that 
$r$ makes w.r.t normal at the point of strike on edge $e$
and $\theta'$ the angle of refraction. 
The discretization results in no refracted rays that
make an angle w.r.t. the normal at $e$ that lies 
between $\theta'$ and $90$ degrees.
To account for these, rays are generated  from $p$ that
lie in the cone defined by apex $p$ and extremal rays $r'$
and $(p,v)$. The cone lies on the face $f'$.
These additinal rays are termed {\it Steiner rays} generated from critical
points.
See Fig. \ref{fig:initcritsrc}.

\subsection*{The organization of Rays}

The set of rays originating from a vertex $v$ and  steiner rays  originating
from successive 
critical points that arise when a ray from $v$ strikes an edge at a 
critical angle, can be arranged in a tree structure ${\cal T}_R (v)$, 
known as {\em tree of rays}.
The root node of $\cT (v)$ corresponds to the vertex $v$. All the other
nodes
of the tree correspond to critical points of entry as follows:
an internal node
$w$ in the tree corresponds to a critical point of entry generated by a ray
from
$parent(w)$ striking an edge $e \in \PS$ at the critical angle.
Steiner rays are generated from this critical point as described above.
Thus with each node $u$  of the tree is associated a set of rays, 
denoted by $\cRay (u)$, generated from $u$ and  arranged in angle-sorted
order.
The tree $\cTray (v)$ thus  comprises the set of rays (traced or otherwise) 
originating from $v$ and internal nodes corresponding to the 
critical points of entry.
Note that all the rays in ${\cal T}_R (v)$ are not necessarily traced.

A path in the tree $\cTray (v)$ corresponds to a sequence of points 
$p_0(=v), p_1, p_2, \ldots, p_m$ in $\PS$, and a sequence of 
rays $r_0, r_1, \ldots, r_m$ where each point $p_i$ is a critical point and
the 
origin of a set of rays, and $r_i$ is a ray that originates from $p_i$ 
and strikes 
an edge of $\PS$ at $p_{i+1}$.
For a point $p_j$ in this sequence, the sequence of points $\{p_0, p_1,
\ldots, p_{j-1}\}$ are termed as  {\it ancestors of $p_j$ (or $r_j$)} and
points in the set $\{p_{j+1}, p_{j+2}, \ldots, p_m\}$ are termed as {\it
descendants of $p_j$ (or $r_j$)}. 

Suppose $v \in {\cal V} \bigcup \cC$ is either a vertex or a critical point 
of entry.  
Consider the rays  $\cRay (v)$ that originate from $v$ and are ordered by
the angle they make w.r.t a coordinate axis with origin at $v$.
We define two rays $r_1, r_2$ that originate from $v$ 
as {\it successive/adjacent} whenever $r_1, r_2$ are either adjacent in the 
ordered set $\cRay (v)$ or are the first and last rays in $\cRay (v)$.
See Figs. \ref{fig:vertsrc} and \ref{fig:initcritsrc}.
\begin{figure}
\centerline{\epsfysize=160pt \epsfbox{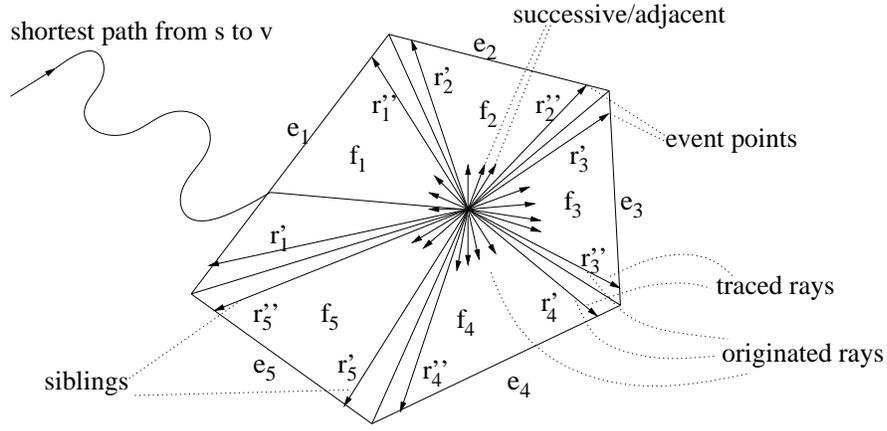}}
\caption{\label{fig:vertsrc} Tracing rays from a vertex}
\end{figure}

\begin{figure}
\centerline{\epsfxsize=320pt \epsfysize=160pt
\epsfbox{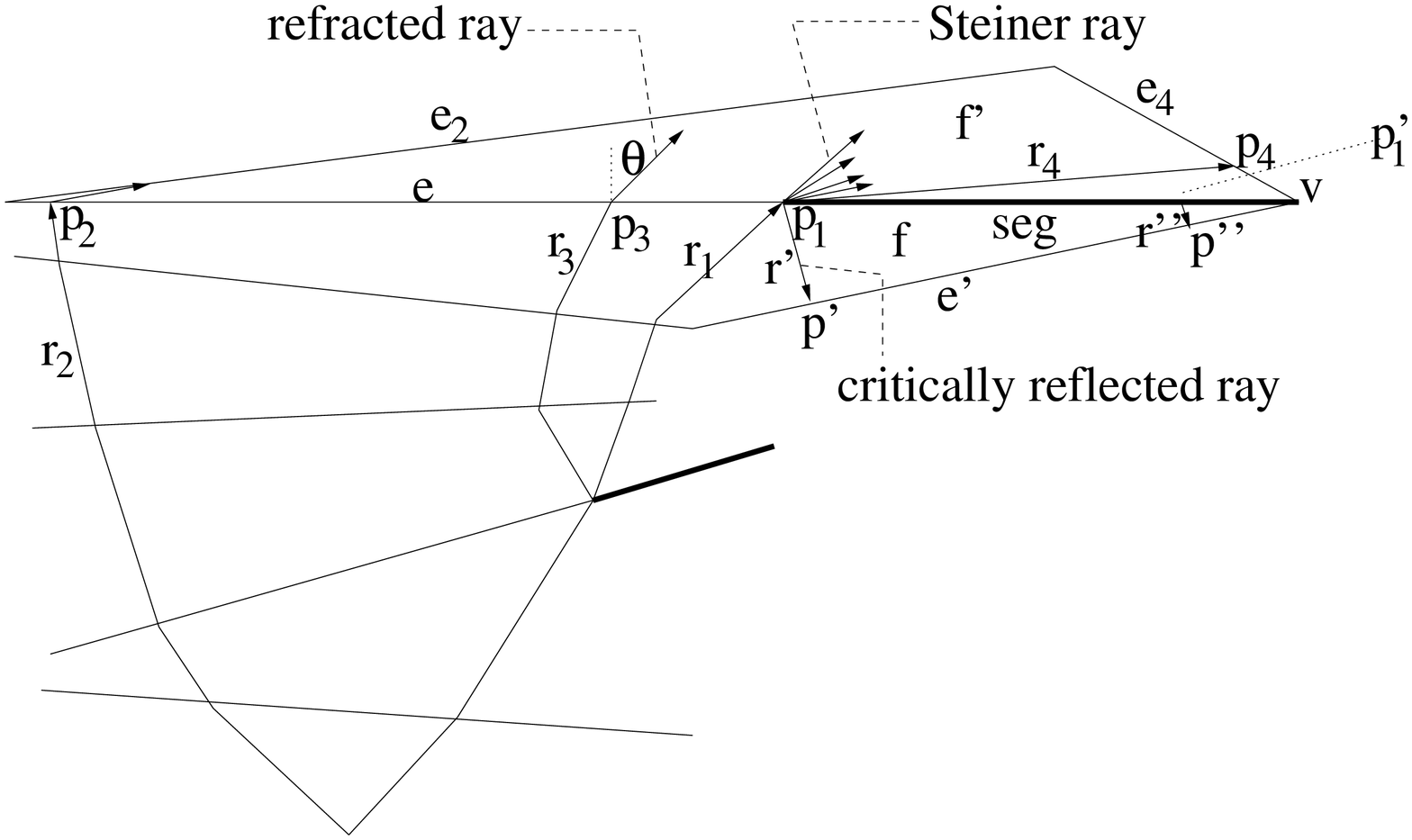}}
\caption{\label{fig:initcritsrc} Tracing critically reflected and Steiner
rays}
\end{figure}

Let $\kappa \in {\cal K}$ be a critical edge segment. 
Since the shortest path can be reflected back from
from any point on the critical edge segment, 
the algorithm will generate rays that start from various points on $\kappa$.
These rays reflect back into the face bordering $\kappa$ making an angle
with the critical segment equal to the critical angle, as specified by
Snell's Law.
Let $\cRay (\kappa)$ be the ordered set consisting of rays that originate
from $\kappa$. 
The ordering of rays in $\cRay (\kappa)$ is achieved by the sorted order of
the Euclidean distance of their origins from the critical point of entry,
$cs$, corresponding to $\kappa$.
We define two rays $r_1, r_2$ originating from $\kappa$ to be  {\it
successive/adjacent} whenever $origin(r_1)$ and $ origin(r_2)$ are adjacent
in the ordering along  $\kappa$.
See Fig. \ref{fig:reflcritsrc}. 

\begin{figure}
\centerline{\epsfysize=270pt \epsfbox{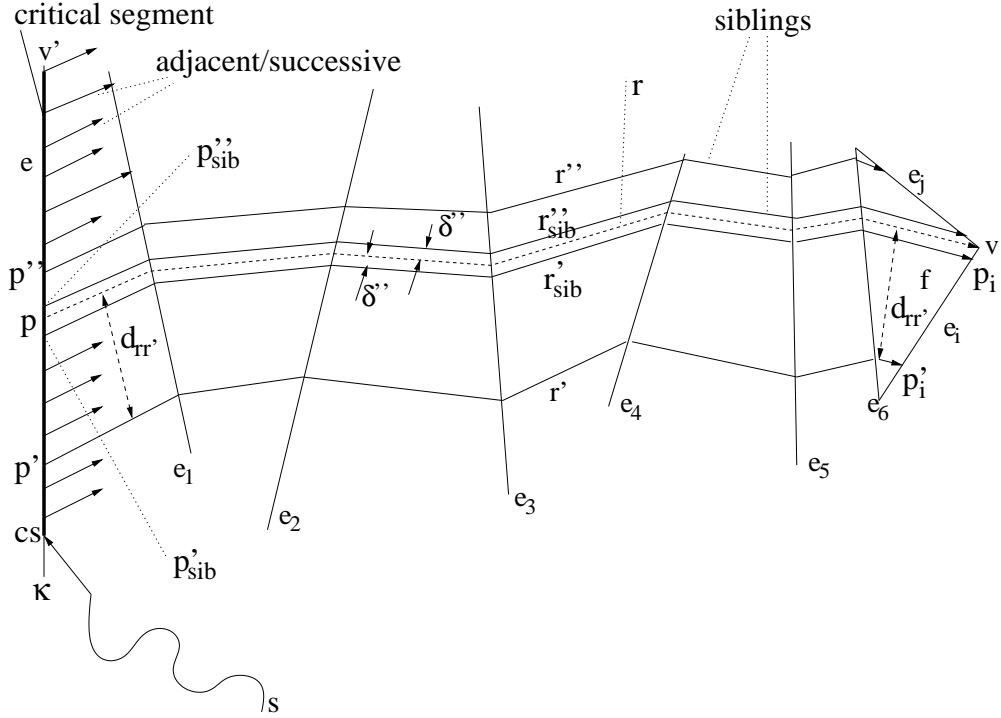}}
\caption{\label{fig:reflcritsrc} Critically reflected rays from a critical
segment}
\end{figure}

Consider a ray $r$ that has originated  from either $w \in \cTray(v)$ or
from a point $p$ for $p \in \kappa$ and $\kappa \in \cK$.
Let  $\cE (r)$ be the sequence of edges intersected by  $r$ as the 
ray propagates via a sequence of refractions.
Let $e=(v_i, v_j)$ be an edge in $\cE (r)$ and
let $\cE (r,e)$ be the initial subsequence of edges in $\cE(r)$ that
ends at $e$. 
We let $R(e,v)$ be the set of rays in $\cTray(v)$ that intersect $e$
and let 
$R(e,v,r) = \{ r'  | r' \in \cTray(v)$ and either
$\cE(r',e) $ is a subsequence of $ \cE(r,e) \}$ or the reverse is true.
Consider the set of intersection 
points of rays on $e$, 
i.e. $I(e, v,r) = \{p(e, v) \vert p(e, v)= r \cap e, \forall r \in R(e, v,
r) \}$.
Let $d(p, q)$ be the Euclidean distance between $p$ and $q$.
Define  $r'$ to be the ray such that $r' = \arg \min_{r \in R(e, v,r)} \{
d(p(r),v_i) \}$ 
i.e., $r' \in R(e, v,r)$ is the ray which intersects $e$ at the 
point $p_i$ such that among all points in $I(e, v, r)$, $p_i$ is closest to
$v_i$.
Similarly, let $r'_{sib}  = \arg \min_{r \in R(e, v, r)} \{ d(p(r),v_j) \}$.
The rays $r'$ and $r'_{sib}$ are  termed as {\it siblings} and
are computed by the relation $r'_{sib}= {\cal S} ( \cE (r , e),v)$.
These rays  will be {\em traced}.
See Figs. \ref{fig:vertsrc} and \ref{fig:reflcritsrc}.
All other rays in $R(e, v,r)$ are known as {\it intermediate rays between
sibling rays $r'$ and $r'_{sib}$}.
Note that two sibling rays can be successive rays.

Let $r$ be a Steiner ray that has originated from a critical point in a 
tree $\cTray (v)$, say at node $w \in \cTray (v)$ and $r'$ be the sibling 
of ray $r$, i.e $r'= {\cal S}(\cE(r,e),v)$, that originated at node $w'\in
\cTray (v)$.
Let $v'$ be the critical point of entry at the least common 
ancestor of $w'$ and $w$ in $\cTray (v)$ and let the ordered set $S(r)=\{r_0
, r_1, \ldots, r_k=r\}$ comprise the traced rays that originate from the
sequence of points $v', v_1, \ldots, v_k=w$, respectively, in $\cTray (v)$.
The tree path $P$ from $v'$ to $w$, 
corresponding to the sequence of critical points 
and sequence of rays in $S(r)$,
 is termed as the {\it critical ancestor path} of $r$.  
Note that the critical ancestor path of a ray $r$ is always defined with 
respect to its sibling.

We define two rays $r_1, r_2$ to be {\it parallel} if (i) $r_1$ and $r_2$
traverse the same sequence of faces ${\cal F}$ and (ii) $\forall f \in {\cal
F}$, the line segments $f \cap r_1$ and $f \cap r_2$ are parallel.

With every ray we associate a classification, $sourcetype$, from the set
\{{\it vertex, critseg, critptentry}\}.
Here, {\it vertex, critseg, critptentry} flags indicate that the ray $r$ has
originated from a vertex source, reflected from a critical segment, or
originated as a Steiner ray from a critical point of entry, respectively.

We will use $d_w(u, v)$ to be the length of the shortest weighted geodesic
path 
from $u$ to $v$.
\subsection*{Algorithm Outline}
\label{sect:weispalgo}

We outline the algorithm, which uses essentially  a discretized wavefront
approach.
The algorithm starts with a set of rays $R(s)$, that have the source $s$ as
the origin and are uniformly distributed in the ball around $s$.
As (approximate) shortest paths are discovered to vertices in $\PS$, these
vertices are used as origins to initiate rays, that are used to
determine the shortest path map w.r.t. these vertices.

Let $u$ be a vertex from which rays are originated.
Let $F$ be the set of faces incident to $u$.    
For each face $f \in F$ and for each $e \in f$, the pairs  of
sibling rays in $R(e, u)$ are traced further. 
At any point in the algorithm, there will be a number of such sibling pair
of rays.
The events corresponding to the intersection of these rays with the edges of
$\PS$ are ordered using event queues.

In general, consider the set of rays, $R(e_1, w)$.
Let $e_1 = f \cap f'$.
Suppose that one of these rays, $r$, strikes edge $e_1$ by traversing along
the 
face $f'$ and is required to be refracted and propagated 
onto the bordering face $f$.
Let $e_2$ and $e_3$ be the other two edges of $f$ such that $v = e_2 \cap
e_3$.
Propagating the sibling rays $r$ and  $r' ={\cal S}(\cE(r,e_1),v)$, 
where $r' \in \cTray (v)$,
involves computing the direction of these rays from face $f'$ to $f$, which
is accomplished using Snell's refraction formula.
Suppose one of the sibling rays strikes $e_2$ and the other strikes $e_3$. 
Then the set of rays $R(e_1,w)$ is split into two sets of rays $R(e_2,w)$
and $R(e_3,w)$ 
via a binary search procedure. 
Sibling rays in these two sets, say $r_1$ and $r_2$ are computed and an
approximation of the distance from $w$ to $v$ is computed.
Then the vertex $w$ originates a new set of rays.
We shall show that if the rays are sufficiently close enough, in angular
distance, then the approximation to the shortest distance from $w$ to $v$ is
bounded by a factor of $1+\epsilon$. 

Further, 
the rays, $R(e_1, w)$ may critically reflect at $e_1$, in which case the
critical point of entry, say $p_1$, needs to be determined. 
This generates a critical edge  segment, $\seg = (p_1, v')$ where $v'$ is an
endpoint of $e_1$.
All rays in $R(e_1, w)$ that are incident to points on the line segment
$\seg$ reflect back onto the face $f$.
The critical point of entry and the critical edge  segment each become the
origin of the following two sets of rays. 
See Fig. \ref{fig:initcritsrc}.
We let $r_1$ be the ray in $R(e_1,w)$ that strikes $e_1$ at $p_1$ and let
$r$ be the ray 
adjacent to $r_1$ incident to $e_1$ and at less than the critical angle.
\begin{enumerate}

\item
{\bf Set 1: Steiner Rays}.
Let $\theta$  be the angle which the refracted image of $r$, termed $r_R$,
 makes with the normal at edge $e$ in face $f'$.
The angular space  w.r.t. edge $e$ 
in the range $[0, \frac{\pi}{2}-\theta]$ needs to be populated with rays to
ensure an $\epsilon$ approximation. 
To achieve this, Steiner rays are generated from $p_1$ onto face $f'$ such
that

(a) There is a steiner ray, $r_{s1}$ originating  at $p_1$ onto face $f'$
that is parallel  to $r_R$.

(b) There is  a set of rays originating from $p_1$ onto face $f'$, the rays
uniformly
subdividing the angular space between $r_{s1}$ (which originates at $p_1$)
and $(p_1,v')$.

\item
{\bf Set 2: Critically Reflected  Rays}.
The other set of rays, which are critically reflected, are generated using
the points on the critical edge segment $\seg$.
These rays have the property that they are parallel and leave the edge $e$
and enter back onto face $f$ at an exit angle which is the critical angle
$\theta_c$ (Lemma \ref{lem:crit}). 
Thus the critical edge  segment originates a set of parallel rays each
separated by
a weighted distance of $\delta$, the siblings 
defining these sets of rays being easily computed.
See Fig. \ref{fig:reflcritsrc}. 
Note that, fortunately, since two critical reflections cannot occur
simultaneously on a shortest path (Lemma~\ref{lem:onecritinbetween}), a
critically reflected ray will not generate any further critical points of
entry.
In addition to the reflected rays, the entire 
critical segment from the critical source, $p_1$, to the endpoint $v'$ is
also included
as a potential shortest path to $v'$ via the critical source.

\end{enumerate}

The above description of Set 1 gives the following property.
\begin{property}
\label{prop-adj}
Two adjacent rays in $\cTray(v)$ are either parallel or originate from 
the same node $w \in \cTray(v)$
\end{property}

\subsection*{Bounding the number of traced rays}
The shortest path from the source $s$ to the destination can be split into 
sub-paths where each sub-path can be classified into types of sup-paths:

\begin{enumerate}
\item
A Type-1 path starts at a vertex in $\PS$ and only refracts along the path.

\item
A Type-2 path uses critical segments, where no two critical segments occur 
consecutively, i.e every section of the path  starts at a vertex, refracts 
and possibly critically reflects (just once) before either reaching the destination or another vertex.

\end{enumerate}
As evident from Lemma~\ref{lem:onecritinbetween}, these two paths suffice.

Our correctness proof is split into two parts.
The first part shows how to closely approximate the {\it Type-1 paths}, which do not critically reflect in-between. 
The second part shows how to approximate
{\it Type-2 paths}, which critically reflect in-between.

In order to establish that our algorithm approximates Type-1 paths, 
we first establish a bound on the number of rays to be propagated from 
the source.
\begin{lemma}
\label{thm:weisprefrnoncritical}
Let the angle between two successive rays in $\cTray(v)$, $\forall v \in {\cal V} \bigcup {\cal C}$,  be less than or equal to $\frac{w\epsilon'}{2W}(\frac{\epsilon'}{K})^{n^2}$. 
Then  a Type-1 shortest path from a  vertex $u$ to a vertex $v$ can be $\epsilon$-approximated 
using sibling rays in $\cTray(v)$.
Here $K$ is a large constant, $\epsilon' = \frac{\epsilon}{n(n^2+LW)}$, $n$ is 
the number of vertices, 
$w$ is the minimum and  $W$ is the maximum weight over all faces, 
and $L$ is the maximum length of any edge in $\PS$.
\end{lemma}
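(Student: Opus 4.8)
The plan is to exhibit, among the traced sibling rays in $\cTray(v)$, a ray whose refracted trajectory reaches a neighborhood of $u$ with weighted length at most $(1+\epsilon)\,d_w(u,v)$. First I would invoke Lemmas~\ref{lem:onecritinbetween} and~\ref{lem:edgeseqlen}: since the path is Type-1 it only refracts, so it is piecewise linear, obeys Snell's Law $\alpha_f\sin\theta=\alpha_{f'}\sin\theta'$ at every crossing, and traverses a fixed edge sequence $\eseq(\cface)=e_1,\dots,e_k$ of length $k=O(n^2)$. This path corresponds to a unique launch direction $\phi^\ast$ from $v$, and the rays from $v$ that cross exactly this edge subsequence form an angular interval whose extremal (sibling) rays are traced; since every pair of successive rays in $\cRay(v)$ subtends an angle at most $\theta_0:=\frac{w\epsilon'}{2W}(\frac{\epsilon'}{K})^{n^2}$, the true direction $\phi^\ast$ is bracketed by two adjacent traced rays $r_1,r_2$ with $|\phi_{r_1}-\phi_{r_2}|\le\theta_0$.

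The heart of the argument is a per-refraction perturbation analysis bounding how far the trajectories of $r_1,r_2$ (and the true path) can separate as they cross $e_1,\dots,e_k$. I would track at each edge $e_i$ the angular difference $\Delta_i$ between the two ray directions in the common face and the accumulated weighted-length difference $\sigma_i$. Differentiating Snell's Law gives $d\theta'=\frac{\alpha_f\cos\theta}{\alpha_{f'}\cos\theta'}\,d\theta$, so a single refraction scales the angular perturbation by $\frac{\alpha_f\cos\theta}{\alpha_{f'}\cos\theta'}$; straight propagation across a face of diameter at most $L$ turns an angular perturbation into a positional one of size at most $L\,\Delta_i$, and a positional offset $\rho$ on an edge contributes at most $W\rho$ to the weighted-length difference. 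I would show each such step multiplies the tracked spread by at most the constant $K$ (which absorbs the weight ratio $W/w$, the factor $L$, and the Snell sensitivity), so that after $k=O(n^2)$ crossings the total spread is at most $K^{O(n^2)}\theta_0$. Substituting the hypothesized $\theta_0$ collapses this to a quantity bounded by $\tfrac{w\epsilon'}{2W}$ times a factor that is still $\le\epsilon'$, giving a final separation of $O(\epsilon')$ on the last edge and uniform bracketing of the true crossing points throughout.

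Finally I would convert this geometric closeness into the length bound. Since $r_1,r_2$ bracket $\phi^\ast$ and traverse the same $O(n^2)$ edges as the true path, the weighted length of $r_1$'s trajectory differs from $d_w(u,v)$ by an additive error at most the sum over the crossed edges of $W$ times the per-edge positional separation, hence at most $O(n^2 W)$ times the final spread $K^{O(n^2)}\theta_0$. A direct computation with $\theta_0=\frac{w\epsilon'}{2W}(\frac{\epsilon'}{K})^{n^2}$ makes this error $O(\epsilon' w)$, which since $\epsilon'=\frac{\epsilon}{n(n^2+LW)}\le\epsilon$ and $w\le d_w(u,v)$ (all weights being at least $1$) is at most $\epsilon\,d_w(u,v)$, establishing the $(1+\epsilon)$ bound. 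The main obstacle is the per-refraction factor $\frac{\alpha_f\cos\theta}{\alpha_{f'}\cos\theta'}$: when a ray refracts into a lighter face its incidence approaches the critical angle $\theta_c(f,f')$ of Lemma~\ref{lem:crit} and $\cos\theta'\to 0$, so adjacent rays can diverge sharply. The delicate point is to bound this factor by the constant $K$ at \emph{every} refraction on a purely refracting Type-1 path—using that such a path stays strictly below critical incidence and that integrality of the weights keeps the relevant margin away from $\frac{\pi}{2}$—and it is exactly this near-critical amplification, compounded over $O(n^2)$ refractions, that forces the exponential factor $(\epsilon'/K)^{n^2}$ in the required angular resolution.
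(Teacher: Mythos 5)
There is a genuine gap, and it is exactly at the point you yourself flag as ``the delicate point.'' You claim the per-refraction amplification factor $\frac{\alpha_f\cos\theta}{\alpha_{f'}\cos\theta'}$ can be bounded by a constant $K$ at every crossing of a Type-1 path, ``using that such a path stays strictly below critical incidence and that integrality of the weights keeps the relevant margin away from $\frac{\pi}{2}$.'' This is false: a Type-1 path merely refracts, but its incidence angle at an edge is a continuous function of the positions of $u$, $v$ and the geometry, and can be arbitrarily close to the critical angle $\theta_c(f,f')$; integrality of the weights fixes $\theta_c$ itself but gives no lower bound on the margin $\theta_c-\theta$, so $\cos\theta'$ can be arbitrarily small and the amplification factor is unbounded. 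A related failure hits your bracketing step: if the true launch direction $\phi^\ast$ has near-critical incidence at $e_1$, one of the two adjacent rays from $v$ that you claim bracket it will strike $e_1$ \emph{above} critical and reflect rather than refract, so on the far face the true path is no longer sandwiched between traced refracted rays from $v$ at all. Your proposal is also internally inconsistent: if each step really multiplied the spread by only a constant $K$, the hypothesis would only need angular resolution $\frac{w\epsilon'}{2W}K^{-n^2}$, whereas the lemma demands $\frac{w\epsilon'}{2W}(\frac{\epsilon'}{K})^{n^2}$ --- the extra $(\epsilon')^{n^2}$ is precisely the signature of a per-step factor of order $K/\epsilon'$, not $K$.

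The paper's proof closes this hole with an idea your proposal never uses: the Steiner rays. When rays from $v$ reach an edge at or above critical incidence, the algorithm creates a critical point of entry and launches from it a fan of Steiner rays on the far face, covering the angular range down to a minimum angle $\epsilon'/K$ with the edge (these become internal nodes of $\cTray(v)$, which is why the lemma speaks of $\cTray(v)$ rather than just $\cRay(v)$). Because every ray actually used in the approximation makes an angle at least $\epsilon'/K$ with the edge, the cosine-ratio $\beta$ in the paper's divergence recurrence $\delta_{i+1}\approx\frac{\alpha_i}{\alpha_{i+1}}\frac{\cos\theta_i'}{\cos\theta_{i+1}}\delta_i$ is bounded by $K/\epsilon'$, and compounding over $O(n^2)$ crossings yields exactly the $(\frac{\epsilon'}{K})^{n^2}$ resolution in the hypothesis; the paper then does a separate error accounting for switching to Steiner rays near critical incidences ($O(n^2)$ of them, each costing a $(1+\epsilon')$ factor) and for snapping the destination vertex to a sibling ray (additive $LW\epsilon'$ per vertex, $O(n)$ vertices), which is what dictates $\epsilon'=\frac{\epsilon}{n(n^2+LW)}$. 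A minor further point: your final absorption step uses $w\le d_w(u,v)$, which fails when $u$ and $v$ are at Euclidean distance less than $1$; the paper avoids this by keeping all error terms relative to the optimal distance $d$.
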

\vspace*{-.10in}
\begin{proof} 
\noindent 
Given a Type-1 path $P$ from, a vertex, say $u$, to another vertex, say $v$, we show that there exists a Type-1 path $P'$ using rays in our discretization so that $P'$ closely approximates $P$. 
This path can use Steiner rays also. 
We show that the rays originated from $u$ and successive critical points of entry, considered by the algorithm, are sufficient to discretize the space and ensure an $\epsilon$-approximation to any Type-1 path from $u$. 
We do so by (a) establishing a bound on the divergence of adjacent rays as they propagate
(b) establishing an error bound on using steiner rays to approximate rays in $\cTray(v)$
and (c) bounding the distance of a vertex from a set of rays in $\cTray(w)$, $w$ a vertex
or critical source and establishing errors in using a sequence of vertices $w_i, i=1 \ldots k$ and their corresponding $\cTray(w_i)$ to determine the shortest path. 

\paragraph{(a) Divergence of rays:}
We first consider the section of the path with no intermediate 
vertices.
We show a bound on the angle between adjacent rays so
as to limit the divergence of the rays in $\cTray(u)$.
Let two rays, say $r', r''$, originating from a vertex $u$ be separated by an angle $\gamma$, and traverse across regions with weights $\alpha_1, \alpha_2, \ldots, \alpha_p$ respectively. 
This would be the regions through which the section of the shortest path,
 under consideration, traverses.
Let the edges intersected by the rays be $e_1, e_2, \ldots, e_{p-1}$ respectively.
W.l.o.g. we consider the case when the angle between the rays $r'$ and $r''$, is increasing due to refractions at these edges and for this case to occur we assume that $W=\alpha_1 \ge \alpha_2 \ge \ldots \ge \alpha_p=w \ge 1$.
Let $\theta_1', \theta_2', \ldots, \theta_{p-1}'$ be the angles at which the ray $r'$ incidents on edges $e_1, e_2, \ldots, e_{p-1}$ respectively.
Let $\theta_2, \theta_3, \ldots, \theta_p$ be the angles at which the ray $r'$ refracts at edges $e_1, e_2, \ldots, e_{p-1}$ respectively. 
Similarly, let $\theta_1'+\delta_1, \theta_2'+\delta_2, \ldots, \theta_{p-1}'+\delta_{p-1}$ be the angles at which the ray $r''$ incidents on edges $e_1, e_2, \ldots, e_{p-1}$ respectively.
And, let $\theta_2+\delta_2, \theta_3+\delta_3, \ldots, \theta_p+\delta_p$ be the angles at which the ray $r''$ refracts at edges $e_1, e_2, \ldots, e_{p-1}$ respectively. 
Assume that $\theta_p \le \theta_{critical} \le \frac{\pi}{2}$; and, $\forall_i, \delta_i$ is small. \hfil\break
\hfil\break
For every integer $i \in [1, p-1]$, \hfil\break
\begin{eqnarray}
\alpha_i\sin{\theta_i'}=\alpha_{i+1}\sin{\theta_{i+1}} \label{eq:refr}
\end{eqnarray} 
For every integer $i \in [1, p-1]$,
\begin{eqnarray}
\lefteqn{\alpha_i\sin{(\theta_i'+\delta_i)}=\alpha_{i+1}\sin{(\theta_{i+1}+\delta_{i+1})}}  \nonumber \\
& \Rightarrow & \alpha_i\sin{\theta_i'}\cos{\delta_i}+\alpha_i\cos{\theta_i'}\sin{\delta_i} = \alpha_{i+1}\sin{\theta_{i+1}}\cos{\delta_{i+1}}+\alpha_{i+1}\cos{\theta_{i+1}}\sin{\delta_{i+1}} \nonumber
\end{eqnarray} 

Since $\delta$ is small ($-0.05 \le \delta \le 0.05$), approximating $\sin{\delta}$ with $\delta$ and $\cos{\delta}$ with one, the above equation becomes
\begin{eqnarray}
\lefteqn{\alpha_i\sin{\theta_i'}+\alpha_i\delta_i\cos{\theta_i'} \approx \alpha_{i+1}\sin{\theta_{i+1}}+\alpha_{i+1}\delta_{i+1}\cos{\theta_{i+1}}} \nonumber \\
& \Rightarrow & \alpha_i\delta_i\cos{\theta_i'} \approx \alpha_{i+1}\delta_{i+1}\cos{\theta_{i+1}}  (from (\ref{eq:refr})) \nonumber \\
& \Rightarrow & \delta_{i+1} \approx \frac{\alpha_i}{\alpha_{i+1}}\frac{\cos{\theta_i'}}{\cos{\theta_{i+1}}}\delta_i  \label{eq:delta}
\end{eqnarray}
Therefore, 
\begin{eqnarray}
\delta_p \approx \frac{\alpha_1}{\alpha_p}\frac{\cos{\theta_1'}}{\cos{\theta_2}}\frac{\cos{\theta_2'}}{\cos{\theta_3}}...\frac{\cos{\theta_{p-1}'}}{\cos{\theta_p}}\delta_1  \nonumber \\
\Rightarrow \delta_p \le \frac{\alpha_1}{\alpha_p}(\max(\frac{\cos{\theta_1'}}{\cos{\theta_2}},\frac{\cos{\theta_2'}}{\cos{\theta_3}},...,\frac{\cos{\theta_{p-1}'}}{\cos{\theta_p}}))^p\delta_1  \nonumber
\end{eqnarray}
Let $\max(\frac{\cos{\theta_1'}}{\cos{\theta_2}},\frac{\cos{\theta_2'}}{\cos{\theta_3}}, \ldots, \frac{\cos{\theta_{p-1}'}}{\cos{\theta_p}}) = \beta$. 
Then the above becomes, 
\begin{eqnarray}
\delta_p \le \frac{\alpha_1}{\alpha_p}\beta^p\delta_1 \label{eq:deltan}
\end{eqnarray}
From (\ref{eq:refr}), $\alpha_i \ge \alpha_{i+1}$, and $\theta_i' \le \theta_{i+1}$ we know that $\cos{\theta'_i} \ge \cos{\theta_{i+1}}$.
Since $\frac{\alpha_i}{\alpha_{i+1}} \ge 1$, $\frac{\cos{\theta_i'}}{\cos{\theta_{i+1}}} \ge 1$; from (\ref{eq:delta}), $\delta_{i+1} \ge \delta_i$.
Hence, $\delta_p \ge \delta_{p-1} \ge ... \ge \delta_1$. \hfil\break
\hfil\break

\begin{figure}
\centerline{\epsfxsize=400pt \epsfbox{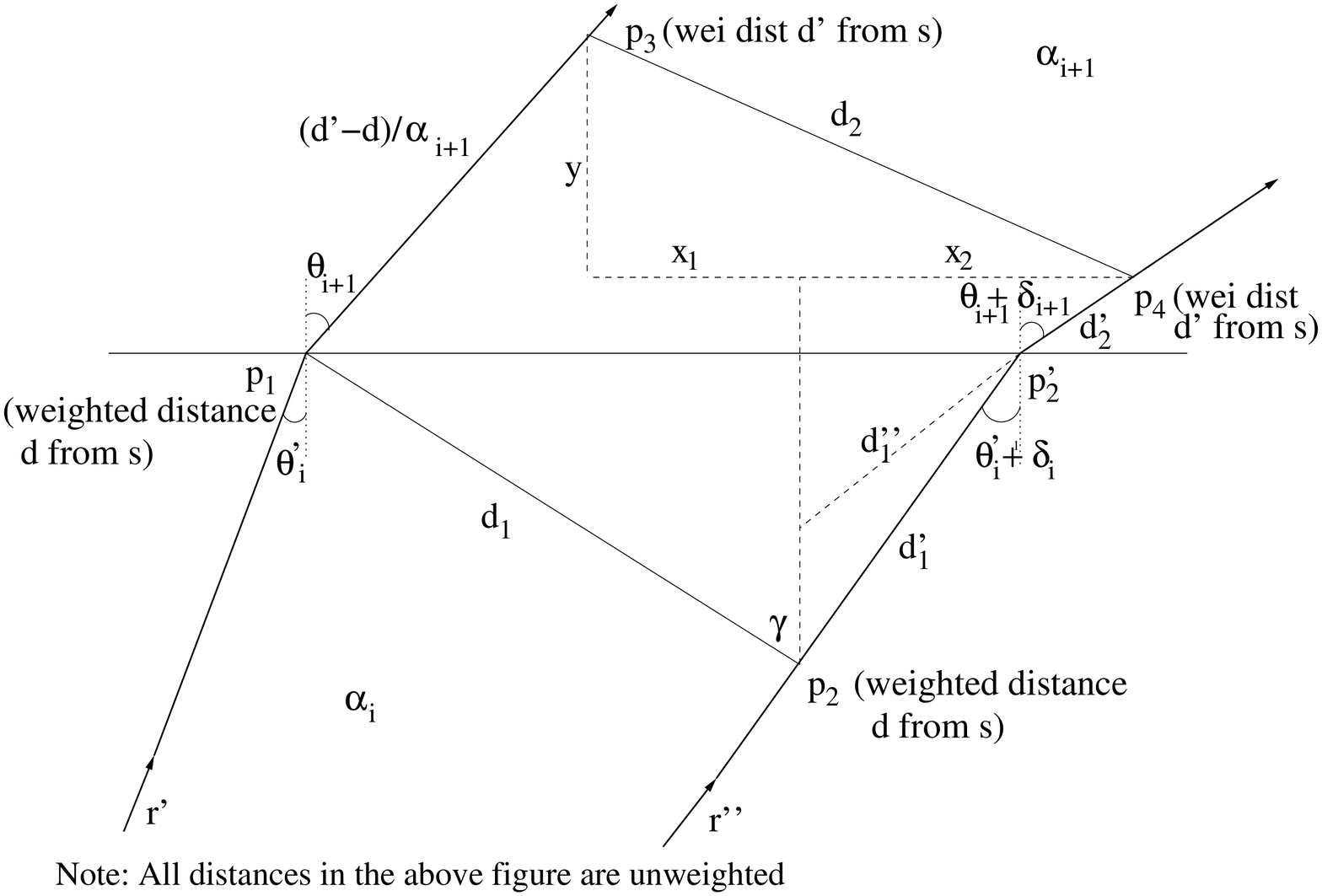}}
\caption{\label{fig:refrthm} Refraction of two successive rays from a vertex} 
\end{figure}

We next consider the distance between two points such that the
points are (i) at equal distance from a vertex $v$
(ii) lie on adjacent rays $r'$ and $r''$ in $\cTray(v)$.
Consider an edge $e_i$ on which the ray $r'$ incidents at $p_1$ with an angle of incidence $\theta_i'$ and refracts from $p_1$ with an angle of refraction $\theta_{i+1}$ whereas the ray $r''$ incidents at $p_2'$ with an angle of incidence $\theta_i'+\delta_i$ and refracts from $p_2'$ with an angle of refraction $\theta_{i+1}+\delta_{i+1}$. 
See Fig. \ref{fig:refrthm}.
Consider two points $p_1, p_2$ located on rays $r', r''$ respectively such that $p_1, p_2$ are in the region $r_i$, with weight $\alpha_i$, the 
weighted distance from $u$ to either of these points is $d$, and $\Vert p_1p_2 \Vert = d_1$. Let $r''$ be incident to $e_i$ at a larger angle than $r'$.
Also, consider two points $p_3, p_4$ located on rays $r', r''$ respectively such that $p_3, p_4$ are in the region $r_{i+1}$ with weight $\alpha_{i+1}$, 
the weighted distance from $u$ to either of these points being $d'$, and $\Vert p_3p_4 \Vert = d_2$.
We wish to establish a bound on $d_2$.
Note that it suffices to restrict attention to the above case when both
$p_3$ and $p_4$ are in $r_{i+1}$. If one of the points, say $p_4$, is in $r_i$
then the bound on $d_2$ will be smaller since $\alpha_{i+1} \leq \alpha_i$.
Also, let $\Vert p_2 p_2' \Vert = d_1'$, $\Vert p_2'p_4 \Vert = d_2'$.
We choose $d''$ such that $d_1''\sin{(\theta_{i+1}+\delta_{i+1})} = d_1'\sin{(\sin{\theta_i'}+\delta_i)}$.
\begin{eqnarray}
(d'-d) = \alpha_id_1'+\alpha_{i+1}d_2' = \frac{\alpha_id_1''\sin{(\theta_{i+1}+\delta_{i+1})}}{\sin{(\theta_i'+\delta_i)}}+\alpha_{i+1}d_2' = \frac{\alpha_i^2d_1''}{\alpha_{i+1}}+\alpha_{i+1}d_2' \label{eq:d'minusd}
\end{eqnarray}
We assume that all points at a weighted distance of $d$ satisfy  
\begin{eqnarray}
\alpha_id_1 \le d\epsilon'  \label{eq:d1alpha1}
\end{eqnarray}
for $\epsilon' < \epsilon$ and  determine conditions such that
\begin{eqnarray}
\alpha_{i+1}d_2 \le d'\epsilon' \label{eq:d2alpha}
\end{eqnarray}
\hfil\break
$\alpha_{i+1}d_2$  \hfil\break
$\le \alpha_{i+1}((x_1+x_2)+y)$ (from triangle inequality) \hfil\break
$= \alpha_{i+1}(|d_1\sin{\gamma}-\frac{(d'-d)}{\alpha_{i+1}}\sin{\theta_{i+1}}+ 
   d_1'\sin{(\theta_i'+\delta_i)}+d_2'\sin{(\theta_{i+1}+\delta_{i+1})}|+ 
   |\frac{(d'-d)}{\alpha_{i+1}}\cos{\theta_{i+1}}-d_2'\cos{(\theta_{i+1}+\delta_{i+1})}|)  \hfil\break 
= |\alpha_{i+1}d_1\sin{\gamma}-(d'-d)\sin{\theta_{i+1}} + 
  \alpha_{i+1}d_1'\sin{(\theta_i'+\delta_i)}+\alpha_{i+1}d_2'\sin{(\theta_{i+1}+\delta_{i+1})}| + 
  |(d'-d)\cos{\theta_{i+1}}-\alpha_{i+1}d_2'\cos{(\theta_{i+1}+\delta_{i+1})}|$  \hfil\break
$\le |\alpha_id_1\sin{\gamma}-(d'-d)\sin{\theta_{i+1}}+(\alpha_id_1'+\alpha_{i+1}d_2')\max\{\sin{(\theta_i'+\delta_i)},\sin{(\theta_{i+1}+\delta_{i+1})}\}|+|(d'-d)\cos{\theta_{i+1}}-\alpha_{i+1}d_2'\cos{(\theta_{i+1}+\delta_{i+1})}|$ (for $\alpha_i \ge \alpha_{i+1}$) \hfil\break
$= |\alpha_id_1\sin{\gamma}-(d'-d)\sin{\theta_{i+1}+(d'-d)\sin{(\theta_{i+1}+\delta_{i+1})}}\}|+|(d'-d)\cos{\theta_{i+1}}-\alpha_{i+1}d_2'\cos{(\theta_{i+1}+\delta_{i+1})}|$ \hfil\break
$\le |\alpha_id_1+(d'-d)(\sin{(\theta_{i+1}+\delta_{i+1})}-\sin{\theta_{i+1}})|+|(d'-d)(\cos{\theta_{i+1}}-\cos{(\theta_{i+1}+\delta_{i+1})})|$
 (from (\ref{eq:d'minusd}), (\ref{eq:d1alpha1}), and, for $0 \le (\theta_i'+\delta_i) \le (\theta_{i+1}+\delta_{i+1}) \le \pi/2, \sin{(\theta_i'+\delta_i)} \le \sin{(\theta_{i+1}+\delta_{i+1})}$) \hfil\break
$\le |d\epsilon'+(d'-d)\delta_{i+1}|+|-(d'-d)\delta_{i+1}|$  (from (\ref{eq:d1alpha1}), and, for small $\delta_{i+1}$) \hfil\break
$\le d\epsilon'+(d'-d)(2\delta_{i+1})$ \hfil\break
\hfil\break
Since $\delta_{i+1} \le \delta_p$, and, from (\ref{eq:deltan}),
$2\delta_{i+1} \le 2\delta_p \le 2\frac{\alpha_1}{\alpha_p}\beta^{p}\delta_1$
$\le 2\frac{\alpha_1}{\alpha_p}\beta^{n^2}\delta_1$ (in the worst-case, $p$ the number of 
edges intersected by a ray is $O(n^2)$)  \hfil\break
To satisfy (\ref{eq:d2alpha}), we need to have,
$2\frac{\alpha_1}{\alpha_p}\beta^{n^2}\delta_1 \le \epsilon'$ i.e., 
$\delta_1 \le \frac{1}{2}\frac{w}{W}(\frac{1}{\beta})^{n^2}\epsilon'$  \hfil\break
We thus get a bound on $\gamma$,
\begin{eqnarray}
\label{bound-angle}
\gamma =\delta_1 \le \frac{1}{2}\frac{w}{W}(\frac{1}{\beta})^{n^2}\epsilon'  \label{eq:xinbeta}
\end{eqnarray}
to achieve abound on the divergence of adjacent rays.

\paragraph{(b)Error in using steiner rays:}
The above analysis will be used to bound the approximation of the  distance
from $v$  to a  point that lies between the two adjacent rays. 
However these two adjacent rays may not exist since a ray striking 
at the critical angle reflects back into the face.
For faces $f_i, f_{i+1}$, let the edge $e_i = f_i \cap f_{i+1}$.
For a ray which incidents on $e_i$ at $\theta_i'$ from $f_i$ and refracts at $\theta_i$ along $f_{i+1}$, let $\beta = \frac{\cos{\theta_i'}}{\cos{\theta_{i+1}}}$.
Suppose the ray $r_i$ is incident on edge $e_i$ at point $p$ with an angle of incidence greater than or equal to $\theta_{ci}$.
Then the algorithm initiates a set $S$ of Steiner rays from $p$.
Among all rays in $S$, suppose the ray $r \in S$ makes the largest angle $\theta_{ri}$ with the normal to $e_i$. 
Then $\beta = \frac{\cos{\theta_{ci}}}{\cos{(\frac{\pi}{2}-\theta_{ri}})}
            \le \frac{1}{\sin{\theta_{ri}}} 
            \le \frac{K}{\epsilon'}$,
for a large constant $K$ and $\epsilon'$ as defined below.

\begin{figure}
\centerline{\epsfxsize=340pt \epsfbox{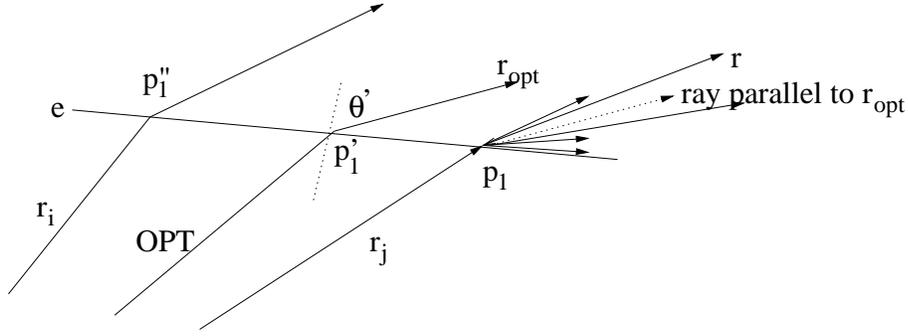}}
\caption{\label{fig:erroroptsucc} Bounding the error when an optimal path traverses between two successive rays}
\end{figure}

Consider two successive rays $r_i$ and $r_j$ from the same point, say $v_1$.
See Fig. \ref{fig:erroroptsucc}.
Let the ray $r_i$ be incident at point $p_1''$ located on edge $e$, at an angle less than the critical angle.
And, let the ray $r_j$ be incident at point $p_1$ located on edge $e$, at an angle greater than or equal to the critical angle. 
Let $d_{P}(x, y)$ represent the distance between points $x$ and $y$ along path $P$. 
Suppose an optimal shortest path $OPT$ intersects edge $e$ between $p_1''$ and $p_1$ at $p_1'$.
Let $OPT$ refracts at point $p_1'$ at an angle $\theta'$.
Further, suppose the ray $r \in S$ makes an angle $\theta''$ with the positive y-axis
and suppose that for any ray $r' \in S$, the absolute difference in the angle $r'$ makes with the positive y-axis and $\theta'$ is greater than or equal to $\vert \theta' -\theta'' \vert$. 
Consider a path $P$ along the ray $r_j$ from $v_1$ to $p_1$ and along the ray $r$.
Then the recurrence $d_{P}(v_1, v_2) = d_{P}(v_1, p_1)+d_{P}(p_1, v_2)$ represents the distance from point $v_1$ to point $v_2$ along path $P$.
Let $p_1''$ be at weighted distance $d_1''$ from $v_1$ and, let $p_1$ 
be at weighted distance $d_1$ from $v_1$.
For $d' = \max(d_1, d_1'')$ and $\alpha_e$ being the weight of edge $e$ and with $\alpha_id_1 \le d\epsilon'$, 
we have $\alpha_e\Vert p_1p_1'' \Vert \le d'\epsilon'$ 
(by the choice of $\gamma$ in equation~(\ref{bound-angle})).
Then $d_{P}(v_1, v_2) \le (1+\epsilon')(d_{OPT}(v_1, p_1')+d_P(p_1, v_2))$.
Let $d$ be the optimal shortest distance from $v_1$ to $v_2$.
Since there can be at most $O(n^2)$ critical points of entry along any path, 
expanding the recurrence yields $d_{P}(v_1, v_2) \le (1+\epsilon')^{c_1n^2}d$.
Let the error in computing the shortest distance from $v_1$ to $v_2$ be denoted by ${\cal Y}$. 
Then for small $\epsilon'$, 
\begin{eqnarray}
{\cal Y} \le dc_1\epsilon'n^2
\label{eq:12}
\end{eqnarray}
\begin{figure}
\centerline{\epsfysize=220pt \epsfbox{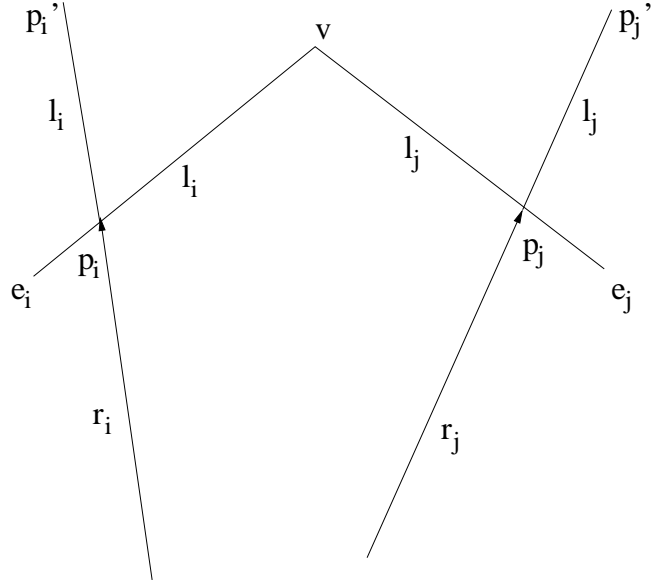}}
\caption{\label{fig:errorvertsucc} Bounding the error when a vertex falls in between two successive rays}
\end{figure}

\paragraph{(c) Error in reaching the destination:}.
We finally show that the distance from an origin $u$ to another vertex $v$
can be approximated  by the sibling rays in $\cTray(u)$.
We let ${\cal E}$ be the sequence of edges traversed by the optimum
path from $u$ to $v$.
Let the vertex $v = e_i \cap e_j$.
Consider the sibling rays in the sequence of rays that intersect 
the sequence of edges ${\cal E}$.
Let the  last edge in the sequence be $e_k$ before the shortest path 
reaches $v$ after which the shortest path enters the triangle $e_i,e_j,e_k$
where $v$ is common to $e_j$ and $e_k$.
We will compare the distance between the ray representing the optimal
path from $u$ to $v$  and the rays in $\cTray(u)$. More specifically
we will consider the rays in $\cTray(u)$ that intersect the sequence
${\cal E}$.
We consider the two sibling rays $r_i \in \cTray(u)$ and $r_j \in \cTray(u)$ 
such that the ray $r_i$ is incident to edge $e_i$ at $p_i$, closest to $v$ and 
the ray $r_j$ is incident to edge $e_j$ at $p_j$, again closest to $v$. 
At least one of these sibling rays exist from the set of rays that
intersect ${\cal E}$.
Note that the rays in $\cTray(u)$ that intersect the sequence
${\cal E}$ split and intersect either both $e_i$ and $e_j$ or 
at least one of the edges $e_j$ or $e_k$.
See Fig. \ref{fig:errorvertsucc}.
We consider the first case. The second case is similar.
Suppose $\Vert vp_i \Vert = l_i, \Vert vp_j \Vert = l_j$.
Let $p_i$ be at weighted distance $d_i$ from $u$ and $p_j$ be at 
weighted distance $d_j$ from $u$.
For $d' = \max(d_i, d_j)$, from the bound (\ref{bound-angle})
we know that $\Vert p_ip_j \Vert \le d'\epsilon'$.
Let $p_i'$ be located on ray $r_i$ extended such that $\Vert p_ip_i' \Vert = l_i$.
Similarly, let $p_j'$ be located on ray $r_j$ 
extended such that $\Vert p_jp_j' \Vert = l_j$.
From above, we know that $\Vert p_i'p_j' \Vert \le \max((d_i+\alpha_{e_i}l_i)\epsilon', (d_j+\alpha_{e_j}l_j)\epsilon')$ where $\alpha_{e_i}$ is the weight along the edge $e_i$ and $\alpha_{e_j}$ is the weight along the edge $e_j$. 
Let $L$ be the maximum edge length and, let $W$ be the maximum weight.
Then $\Vert p_i'p_j' \Vert \le \max(d_i\epsilon', d_j\epsilon')+LW\epsilon'$.
Then at the vertex $v$, the error, ${\cal X}$,
i.e. the distance from the vertex to a point on a traced ray, is upper bounded as follows: 
\begin{eqnarray}
{\cal X} \le LW\epsilon' 
\label{eq:13}
\end{eqnarray}
Even in the case when the set of rays do not split
and there is only one sibling ray closest to $v$ to consider, 
the vertex $v$ on the
ray corresponding to the shortest path from $u$ to $v$ is at 
distance less than $d_w(u,v)\epsilon'$ from the sibling ray 
(from \ref{bound-angle}). The rest of the proof is similar to the first case.

From (\ref{eq:12}) and (\ref{eq:13}), and, given that there are $c_2n$ vertices on a
shortest path where successive errors can acculumuate, 
the total error in computing the distance from $u$ to $v$ is at most $dcn\epsilon'(n^2+LW)$, for $c=\max(c_1, c_2)$.
Since this should not exceed $d\epsilon$, we choose $\epsilon' = \frac{\epsilon}{n(n^2+LW)}$.
Note that the approximation is achieved by rays which are closest to the vertex $v$.
These rays are sibling rays in $\cTray (u)$.
\end{proof}

\def\cseg{{\cal CS}}
\def\kappa1{{\cal CS}}
\begin{lemma}
\label{thm:weisprefrcritical}
Let $P$ be a Type-2  shortest path from a vertex $v$ to another 
vertex $w$ on $\PS$ with at least one critical segment ${\cal CS}$ in-between.
If the angle between successive rays is as specified in Lemma~\ref{thm:weisprefrnoncritical}
then an  $\epsilon$-approximate Type-2 shortest path can be found from
sibling rays  in $\cRay (v), \forall v \in {\cal V} \bigcup {\cal C}$.
\end{lemma}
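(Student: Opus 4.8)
The plan is to split the Type-2 path $P$ at its unique critical segment $(a,b)$, where $a$ is the critical point of entry and $b$ the critical point of exit (Lemma~\ref{lem:crit}), and to reduce each resulting piece to a situation already controlled by Lemma~\ref{thm:weisprefrnoncritical}. By Lemma~\ref{lem:onecritinbetween} a vertex must occur between any critical point of exit and the next critical point of entry, so it suffices to treat a single section of $P$: the sub-path $P_1$ from the origin $v$ to the entry point $a$ refracts only, the segment $(a,b)$ travels along the edge $e$, and the sub-path $P_2$ from the exit point $b$ to the terminating vertex $w$ again refracts only. The errors incurred on the $O(n)$ such sections are then accumulated exactly as in Lemma~\ref{thm:weisprefrnoncritical}.

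First I would approximate $P_1$. Since $P_1$ refracts only, the divergence bound of part~(a) of Lemma~\ref{thm:weisprefrnoncritical} (equation~(\ref{bound-angle})) applies verbatim to the rays of $\cTray(v)$ that reach $e$: the sibling rays bracketing the critical angle at $e$ meet $e$ at points whose separation is at most $d'\epsilon'$, so the critical point of entry $p_1$ computed by the algorithm lies within weighted distance $d'\epsilon'$ of the true entry point $a$. This is the step at which $\cC$ enters the statement, since $p_1\in\cC$ becomes the origin of a fresh set of rays. Next I would account for the travel along $(a,b)$ together with the continuation $P_2$. The algorithm records the entire critical edge segment $\seg=(p_1,v')$ as a candidate path, so the weighted cost $\alpha_e\Vert a b\Vert$ along $e$ is represented, and it emits from $\seg$ a family of pairwise parallel critically reflected rays, all leaving $e$ at the critical angle $\theta_c$, with origins spaced along $\seg$ by weighted distance $\delta$. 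The true exit ray of $P$ at $b$ is parallel to every reflected ray, so I would select the reflected ray whose origin $b'$ is nearest $b$; then $\Vert b b'\Vert$ is at most $\delta/\alpha_e$, and from $b'$ the continuation is once more a refraction-only path through the same face sequence as $P_2$.

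I would then re-run the divergence argument of part~(a), now with the parallel reflected rays of $\cRay(\seg)$ (respectively the Steiner rays of $\cRay(p_1)$) playing the role played by the rays of $\cTray(v)$ in Lemma~\ref{thm:weisprefrnoncritical}, and with the spacing $\delta$ playing the role of the angular gap $\gamma$; the sibling rays among these bracket $w$ with vertex error bounded by equation~(\ref{eq:13}). Choosing $\delta$ commensurate with the $\gamma$ of Lemma~\ref{thm:weisprefrnoncritical} keeps the per-section error at $O(d\epsilon' n^2)$, and summing over the $O(n)$ sections reproduces the same $d\epsilon$ total as in the Type-1 case.

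The hard part will be controlling the angular gap that the reflection opens up. A ray striking $e$ at or above $\theta_c$ reflects rather than refracts, so refracted rays alone never populate the interval $[0,\frac{\pi}{2}-\theta]$ on the far face, and an optimal path that refracts just below the critical angle can fall into a region with no nearby traced ray. This is precisely what the Set~1 Steiner rays from $p_1$ are designed to cover, and the crucial quantitative point is that the steepest such Steiner ray still obeys the divergence-factor bound $\beta\le K/\epsilon'$ established in part~(b) of Lemma~\ref{thm:weisprefrnoncritical}. I expect the delicate bookkeeping to lie in verifying that this single $\beta$ bound survives one reflection and simultaneously governs (i) the approximation of the entry point $a$, (ii) the parallel transport of the reflected family past $\seg$, and (iii) the Steiner-ray coverage of the gap, so that the angle bound $\frac{w\epsilon'}{2W}(\frac{\epsilon'}{K})^{n^2}$ of Lemma~\ref{thm:weisprefrnoncritical} remains sufficient without having to be tightened for the reflecting case.
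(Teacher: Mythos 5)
Your proposal follows essentially the same route as the paper's own proof: both split the Type-2 path at the critical segment into a piece $P_1$ from the vertex to the critical point of entry (controlled by the Type-1 machinery of Lemma~\ref{thm:weisprefrnoncritical}, i.e.\ the error bound of equation~(\ref{eq:12})) and a piece $P_2$ whose first segment is parallel to, and sandwiched between, the critically reflected rays emitted from ${\cal CS}$ at weighted spacing $\delta \le \epsilon'$, then add the vertex error $LW\epsilon'$ of equation~(\ref{eq:13}) and accumulate over the $O(n)$ sections of the path. The differences are only presentational — you track the exit point $b$ and the Steiner-ray coverage explicitly, where the paper simply cites the previous lemma's analysis — so the argument is correct and equivalent.
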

\begin{proof}
The optimum path $P$ can be partitioned into sub-paths, each sub-path going from
a vertex $v$ to a vertex $w$. Let this sub-path use a critical segment and
be partitioned as: 
a path  $P_1$ from $v$ to the critical segment, $\cseg \in \cK$, 
and a path $P_2$ from $\cseg$ to $w$.
Let $e$ be the edge on a face $f$ such that the critical 
segment $\cseg$ lies  on $e$ and reflects rays back onto face $f$. 
Also, let the critical segment $\cseg$ have a critical point of entry $cs$.
We first show that a good approximation to the path from
$\cseg$ to $w$ can be found. 
If $w$ is the endpoint of $\cseg$ then we are done since the distance to
the endpoint from the critical source is included in consideration.
Otherwise consider rays  generated from  $\cseg$ that are parallel
and separated by a weighted distance of $\delta$.
As in the previous lemma, 
we consider in detail the case when  $w$ lies in-between 
two adjacent (parallel) rays from $\cseg$, $r_1$ and
$r_2$, which are sibling rays. 
As in lemma~\ref{thm:weisprefrnoncritical}, the error in 
computing the distance from $w$ to a
point on one of the rays is bounded by $LW\epsilon'$. 
The optimal path  $P_2$ has a first segment that 
is parallel to and lies in between a pair of rays $r'$ and $r''$. The
source of the optimum path lies in between 
the origin of  the two rays, $origin(r')$ and $origin(r'')$, that lie on $\cseg$.
Since  the weighted distance between $origin(r')$ and $origin(r'')$ is at most $\delta$,
the error in determining a path along one of these 
rays is $\delta +LW\epsilon'$.
Further the error in determing the shortest path from $v$ to $cs$ is
at most ${\cal Y} \le dc_1\epsilon'n^2$ (equation.~\ref{eq:12}).
Combining the above errors, choosing $\delta \leq \epsilon'$ and noting
that there are at most $O(n)$ vertices on the shortest path gives the
required error bound as in lemma~\ref{thm:weisprefrnoncritical}.

\end{proof}

\section{Algorithm Details}
\label{sect:algodetails}

In the algorithm, the main procedure is the event handler.
Event handling is categorized into the following types:
\begin{enumerate}
\item Shooting rays from a vertex
\item Refracting a ray
\item Finding the critical point of entry
\item Shooting critically reflected rays from a critical segment and  shooting Steiner rays from a critical point of entry
\end{enumerate}
The description of procedures that process the events is described below.
The  determination and handling of these events is further detailed with the pseudo-code.\\ \\

\subsection{Main Procedure}

The main procedure is {\it weightedEuclideanSP}.
It implements a version of continuous Dijkstra's algorithm as follows: An event heap is maintained, the heap comprising points  ordered by their distance to the source vertex.
The event may be a vertex, a point of strike of a ray, $r$, on edge $e_i$ or a critical source.
Depending on whether the ray strikes the edge at an angle less than or greater than the critical angle, the ray is refracted or dropped from further consideration respectively.
Let a ray and its sibling be such that one of them strikes $e_i$ at an angle 
greater than the critical angle and the other at an angle 
less than the critical angle. 
The point at which a ray, starting from either of the sources of these rays, 
strikes the edge $e_i$  at the critical angle is determined. 
The distance to this point is determined and added to the heap.
This point is termed as the {\em critical source}.
Let $e$ be an edge whose one endpoint is $t$.
The algorithm terminates whenever the event is a vertex event with the vertex being $t$.

The actions required at each event are listed below:
\begin{itemize}

\item[(i)] 
A vertex event triggers the procedure {\it initiateVertexSource} which initiates rays from a given vertex $v$.  
\item[(ii)]
An event which represents a critical point of entry $p_1$ uses the procedure {\it initiateCriticalSource} to initiate both critically reflected rays from the corresponding critical segment and Steiner rays from $p_1$ respectively.
From each critical segment, initially only two critically reflected rays are progressed as siblings.

\item[(iii)]
An event where a ray strikes an edge is handled in two parts.  
The procedure {\it refractRay} extends the ray across the current face $f$.
It determines whether the given ray $r'$ and its current sibling are incident onto the same edge of $f$.  
If not, it finds the correct siblings for both of these rays. 
Then it adds events corresponding to the rays striking the edges of the 
face $f$.
The procedure {\it findCriticalPointOfEntry} finds the ray which 
originates from a vertex/critical point of entry and is incident to an 
edge at a critical angle. 

\end{itemize}

\begin{algorithm}
\caption{weightedEuclideanSP}
\begin{algorithmic}[1]
\vspace*{.1in}

\STATE Initialize  distance between all pairs of 
vertices, $d_w(v, w), \  v, w \in {\cal P}$ to be $\infty$.
\STATE Push source $s$ to the event queue

\WHILE{the event queue is not empty}

	\STATE $evt \leftarrow$ pop the event queue 

	\STATE \COMMENT {The following cases are both mutually exclusive and exhaustive}
	\IF{$evt$ represents a vertex $v \neq t$}
		\STATE initiateVertexSource($v$)
	\ENDIF 

	\IF{$evt$ represents a critical point of entry $cs$}
		\STATE initiateCriticalSource($cs$)
	\ENDIF

	\IF{$evt$ represents a ray $r'$ which incidents to edge $e_i = f_i \cap f_{i+1}$ from face $f_i$ at an angle less than $\arcsin(\frac{\alpha_{f_{i+1}}}{\alpha_{f_i}})$}
		\STATE refractRay($r', e_i$)
	\ENDIF
	\IF{$evt$ represents a ray $r'$ with $r'.sibling=r''$ such that $r'$ incidents to edge $e_i=f_i \cap f_{i+1}$ from face $f_i$ at an angle $\le \theta_{ci}$ and $r''$ incidents on the edge $e_i$ from face $f_i$ at an angle $\ge \theta_{ci}$, for $\theta_{ci}=\arcsin(\frac{\alpha_{f_{i+1}}}{\alpha_{f_i}})$ (note that here $r'$ and $r''$ can be replaced one for the other)}
         
           \IF{ source of ray is not on a critical segment}
		\STATE findCriticalPointofEntry($r', r'', e_i$)
          \ENDIF
	\ENDIF
\IF{$evt$ corresponds to a vertex event where the vertex is $t$}
   \STATE empty event queue
\ENDIF

\ENDWHILE
\end{algorithmic}
\end{algorithm}
To improve the efficiency, as outlined before, we only choose to maintain 
sibling rays, rather than tracing all the rays that have originated 
from a point.
Since an intermediate ray $r$ between two sibling rays always has the same 
intersected  edge sequence ${\cal E} (r)$ as one of these 
sibling rays, the intermediate ray is not required to be traced until 
it is necessary.
These binary searches are detailed in the following procedures.\\ 
\subsection{Procedure initiateVertexSource}
This procedure initiates rays from a vertex $v$, including the source. 
Every face  that includes the vertex $v$ is considered.
For each such face $f$, and the  edge $e_i=(u, w) \in f$ that is not incident 
to $v$, two rays from $R(e_i,v)$, $r_i'$ and $r_i''$ are  determined 
such that $r_i$ is closest to $u$ and $r_i'$ closest to $w$ 
(in fact the procedure discovers successive rays in $\cTray(v)$ that 
are closest to the endpoints of the edge $e_i$).
These rays $r_i$ and $r_i'$ are the sibling rays which will be propagated 
further.
The distance at which these rays will intersect $e_i$ is 
determined and added to the heap.

A binary search on $\cTray(v)$ is used to ensure that rays in $\cTray(v)$ are 
correctly generated along with sibling rays.
\begin{algorithm}
\caption{initiateVertexSource($v$)}
\begin{algorithmic}[1]
\REQUIRE{
A vertex $v$.  
Let $F$ be the set comprising all the faces incident to $v$. 
Let the set of edges defining faces in $F$ be $E'$.
Also, let $E'' (\subseteq E')$ be the set of edges which incident to $v$.
See Fig. \ref{fig:vertsrc}.
}
\ENSURE{Initiates rays from the vertex $v$}
\vspace*{.1in}
\FOR{Each vertex $v' \ne v$ such that $v'$ incident to a face $f \in F$}
	\STATE Binary search over the rays originating from $v$ to determine two successive rays, $r_i', r_j''$, such that for $e_i, e_j \in E'-E'', v' = e_i \cap e_j$, where $r'_i = \arg \min_{r \in R(e_i,v) } \{ d(p(r),v') \}$ and $r'_j = \arg \min_{r \in R(e_j,v) } \{ d(p(r),v') \}$.
        The rays $r_i', r_j'$ are the ones traced further.
\ENDFOR
\FOR{Each edge $e_i$ in $E'-E''$}
	\STATE Let rays $r_i', r_i''$ be sibling rays in $R(e_i,v)$.
	Let the point $p_i' = r_i' \cap e_i$ and, let  $p_i''= r_i'' \cap e_i$.
	Also, let $f$ be the face on which the line segments $vp_i'$ and $vp_i''$ resides.
	Push new events to the heap representing the intersection of ray $r_i'$ with $e_i$, 
which occurs at a distance $d_{i'}=\alpha_f \Vert vp_i' \Vert$ from $v$ 
and the intersection of ray $r_i''$ with $e_i$, which occurs at a 
distance $d_{i''} = \alpha_f \Vert vp_i'' \Vert$ from $v$.
\STATE
Let $v'$ and $v''$ be the endpoints of $e_i$ closest to  $p_i'$ and $p_i''$, 
respectively. Update distance from source $v$ to $v'$ in heap by 
$ \min \{d_w(v, v'), d_{i'}+ d(p_i', v')\}$
and to $v''$ by $ \min \{ d_w(v, v''), d_{i''}+d(p_{i''},v'') \}$.
 
\ENDFOR
\end{algorithmic}
\end{algorithm}
The following is clear from the above discussion:
\begin{lemma}
\label{lemma:initiateVertexSource}
Procedure {\it initiateVertexSource} correctly initiates the tree $\cTray(v)$.
\end{lemma}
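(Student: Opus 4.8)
The plan is to verify that \emph{initiateVertexSource} reproduces exactly the root-level structure of the tree of rays $\cTray(v)$ as prescribed by the definitions in our framework. Recall that the root node of $\cTray(v)$ is the vertex $v$ itself, and that associated with it is the angle-sorted set of rays $\cRay(v)$ fanning out from $v$; ``initiating'' the tree means correctly generating this root, identifying within $\cRay(v)$ the sibling rays that the algorithm will actually trace, and scheduling the corresponding edge-intersection and distance-update events. First I would observe that the root is set to $v$ by construction, matching the definition.

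Next I would argue that the procedure considers precisely the edges that rays from $v$ can first strike. Since the rays of $\cRay(v)$ emanate into the cone around $v$ and $\PS$ is a triangulation, every such ray, before any refraction, lies on some face $f$ incident to $v$ and strikes the unique edge of $f$ opposite $v$. These opposite edges are exactly the set $E' - E''$ iterated over in the second loop, so no ray of $\cRay(v)$ is missed and no spurious edge is introduced.

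The core step is to show the binary searches return the correct sibling rays. By definition, for an edge $e_i=(u,w)$ the two rays to be traced are $r_i' = \arg\min_{r \in R(e_i,v)} d(p(r),u)$ and $r_i'' = \arg\min_{r \in R(e_i,v)} d(p(r),w)$, i.e.\ the rays whose intersection points on $e_i$ are closest to the two endpoints. The justification rests on monotonicity: as the angle of a ray from $v$ increases through the sorted set $\cRay(v)$, both the face it traverses (the faces of $F$ are angularly ordered around $v$) and the position of its intersection point along the opposite edge vary monotonically. Hence the point of strike is a monotone function of the ray's rank in the angle-sorted order, so a binary search over $\cRay(v)$ correctly locates, for each vertex $v'$ shared by two of these edges, the two successive rays bracketing $v'$, and for each edge the two extremal (sibling) rays.

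I expect this monotonicity argument to be the only nontrivial point; once it is in place, the remaining claims---that the scheduled events record the correct weighted intersection distances $d_{i'}, d_{i''}$ on face $f$, and that the tentative values $d_w(v,v')$, $d_w(v,v'')$ are updated by the stated minima---follow immediately from the definitions of weighted length and of $d_w$. Finally, restricting attention to sibling rays rather than tracing all of $\cRay(v)$ loses nothing at this stage, since every intermediate ray shares the intersected edge sequence $\cE(r)$ of one of its bracketing siblings and can be recovered when it becomes necessary.
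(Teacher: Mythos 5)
Your proof is correct and matches the paper's approach: the paper states this lemma with no explicit proof at all (it is introduced as ``clear from the above discussion,'' i.e.\ from the description of \emph{initiateVertexSource}), and your argument is precisely the direct verification of the procedure against the definitions that the paper takes for granted. The one nontrivial ingredient---monotonicity of the strike point on the opposite edge as a function of a ray's rank in the angle-sorted set $\cRay(v)$, which is what legitimizes the binary searches---is correctly identified and argued by you, and is left implicit in the paper.
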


\subsection{Procedure initiateCriticalSource}
When a critical point $p$ on edge $e = f \cap f'$ of face $f$ is encountered, there are two classes of rays that need to be propagated.
One is the set of rays that are to be critically reflected back onto $f$.
And the other is the  set of Steiner rays over the face $f'$ that are required to be generated from the critical source $p$. 
As required in the proof of Lemma~\ref{thm:weisprefrnoncritical} these 
Steiner rays are required to achieve the desired approximation as they serve to fill in for the set of rays on face $f'$ that strike the edge $e$ at an angle just less than the critical angle.
More specifically, let $r_1$, with $r_2$ as the designated sibling, be the ray 
that strikes $e$ at point $p$. 
Let this ray $r_1$ be an element of  $\cTray(w)$ and let $u$ be the origin of $r_1$.
Note that $u$ may be $w$ itself or a critical source.
Consider the ray $r_3$ which is adjacent to $r_1$ in $\cTray(w)$ and which would be refracted at edge $e$.
Simply refracting $r_3$ would not be sufficient since the rays incident to $e$ at angles between the angle $\alpha$ at which  $r_3$ strikes $e$ and the critical angle, refract to cover a wide range of angles at the adjacent face.
These rays are replaced by  rays that originate from $p$.
A node corresponding to the critical source $p$ is created as a child of $u$ in $\cTray(w)$ and a set of rays $\cRay(p)$ are initiated.
Sibling rays that are propagated forward are computed as follows:
One sibling ray, termed $r_4$, is the ray almost parallel to $e$, in fact which makes an angle $\epsilon'/K$
w.r.t. $e$ and the other is the sibling of the ray $r_1$, termed $r_2$, which struck $e$.

These rays are propagated across the face $f'$. Let $e_2$ and $e_4$ be the edges of $f'$ that $r_2 $ and $r_4$  strike.
If $e_2$ is the same as $e_4$ then the rays $r_2$ and $r_4$ are siblings.
Otherwise, new siblings are created using the procedure $findSplitRays$. 
Let $v$ be the vertex common to $e_2$ and $e_4$.
The shortest distance to $v$ from amongst the rays in $\cTray(w)$ lying between $r_2$ and $r_4$ is computed and the event heap corresponding to the distance from the source to the vertex source $v$ is updated. Note that when the shortest distance to $v$ will be discovered as the minimum distance event in the heap, $v$ will be a vertex source.

\begin{algorithm}
\caption{initiateCriticalSource($p_1$)}
\begin{algorithmic}[1]
\REQUIRE{
Critical point of entry $p_1$ on edge $e= (u, v)$ and 
$r_1 \in \cTray(w)$ the ray which caused the critical point of entry $p_1$.
Critical segment $seg = (p_1, v)$.
Let $f$ be the face onto which the critically reflected rays from $seg$ traverse.
Let $\theta_c$ be the critical angle of incidence of edge $e$.
Also, let $r_1.sibling = r_2$, and, let $p_2 = r_2 \cap e$.
See Fig. \ref{fig:initcritsrc}.
}
\ENSURE{Initiates critically reflected and Steiner rays from $seg$ and $p_1$ respectively}
\vspace*{.1in}
\STATE 
Let $p_1'$ be a point at $\Vert p_1v \Vert - \delta'$ distance from $p_1$, for $\delta' < \epsilon'$.
Let $r', r''$ be critically reflected (parallel) rays, making an angle $-\theta_c$ with the normal to $e$, and originate from points $p_1$ and $p_1'$ respectively.
Let $p'$ be the point at which the ray $r'$ incidents on an edge $e'$ of $f$ and, let $p''$ be the point at which the ray $r''$ incidents on an edge $e''$ of $f$. 
Note that $e'$ is not necessarily distinct from $e''$.
Set $r', r''$ as siblings.
Push new events $evt', evt''$ to the heap which represent the intersection 
of $r', r''$ with the edges of $f$ where $evt'$ occurs at a distance $\alpha_f \Vert p_1p' \Vert$ from $p_1$ and $evt''$ occurs at a distance $\alpha_e \Vert p_1p_1' \Vert + \alpha_{f} \Vert p_1'p'' \Vert$ from $p_1$. 
Also, push an event to the heap which corresponds to the 
critically reflected ray reaching
vertex $v$. 
This event  occurs at a distance $\alpha_e \Vert p_1v \Vert$ from $p_1$. 
\IF{$e' \neq e''$}
\STATE 
findCriticallyReflectedray$(r',r'',v')$ where $v'$ is the 
vertex common to $e'$ and $e''$  
\STATE 
Let $r_3$ be the ray adjacent to $r_1$ in $\cTray(w)$ which refracts when it incidents via $f$ to a point $p_3$ located on line segment $p_1p_2$.
Suppose $r_3$ refracts from $p_3$ at an angle $\theta$ onto face $f'$.
A set $S$ of Steiner rays, which make angles in the range $[\frac{\epsilon'}{K}, \frac{\pi}{2}-\theta]$ with edge $e$ and traverse over $f'$, are added to $\cTray(w)$, at a new node $p_1$, i.e. the rays originate from $p_1$.
Here, $K$ is a large constant.
Note that if  $\frac{\pi}{2}-\theta < \frac{\epsilon'}{K}$ then
the range $[\frac{\epsilon'}{K}, \frac{\pi}{2}-\theta]$ is $\emptyset$, and no rays originate from $p_1$.
Let $r_4 \in S$ be the ray which makes an angle $\frac{\epsilon'}{K}$ with edge $e$.
Suppose the ray $r_4$ first intersects with an edge $e_4$ of $f'$ 
at point $p_4$.
Set the rays $r_2$ and $r_4$ as siblings.
Push new event to the heap which represents the ray $r_4$ traversing to $p_4$ and this event occurs at distance $\alpha_{f'} \Vert p_1p_4 \Vert$ from $p_1$.
Trace $r_2$ over face $f'$ by pushing an event corresponding to the intersection of $r_2$ with an edge $e_2 (\neq e)$ of $f'$, termed $p_2'$.
Note that $e_2$ is not necessarily distinct from $e_4$.
\IF{ $e_2  = e_4$}
\STATE Update shortest distance to endpoints of $e_2$ from  $p_2'$ and $p_4$.
\ELSE
\STATE findSplitRays() and update shortest distance to the endpoints of $e_2$ and $e_4$ from the siblings.
\ENDIF
\ENDIF

\end{algorithmic}
\end{algorithm}

\begin{lemma}
\label{lemma:initiateCriticalSource}
Procedure {\it initiateCriticalSource} correctly computes a new node $w$ corresponding to  a critical source on an edge $e$ and the set of rays ${\cal R}(w)$.
It also correctly computes sibling rays incident to edges other than $e$ of $f'$.
\end{lemma}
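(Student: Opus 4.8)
The plan is to prove correctness structurally, stage by stage, appealing to the geometric facts already established in Lemmas~\ref{lem:crit} and~\ref{lem:onecritinbetween} and deferring the numerical approximation guarantee to Lemma~\ref{thm:weisprefrnoncritical}. First I would confirm that the inputs correctly identify the critical object: by Lemma~\ref{lem:crit}, once $r_1 \in \cTray(w)$ strikes $e = f \cap f'$ at the critical angle $\theta_c$ at the point $p_1$, that point is the critical point of entry and $seg = (p_1, v)$ is the critical edge segment along which a shortest path may reflect back onto $f$. Hence creating a new node for the critical source $p_1$ as a child of $u = origin(r_1)$ in $\cTray(w)$ is the correct tree operation, and the origin of the associated ray set $\cRay(p_1)$ is exactly $p_1$; this establishes the first clause of the statement.

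Second I would check the two ray families emanating from $p_1$. For the critically reflected family, Lemma~\ref{lem:crit} guarantees these rays leave $e$ at exit angle $-\theta_c$ and are therefore mutually parallel; the procedure seeds $r'$ and $r''$ from $p_1$ and from $p_1'$ (a point on $seg$ at weighted distance $\delta' < \epsilon'$ from $p_1$), so they are parallel siblings whose ordering agrees with the distance-of-origins ordering used to define $\cRay(\kappa)$. I would verify that the three events pushed carry the correct accumulated weighted distances $\alpha_f\Vert p_1 p'\Vert$, $\alpha_e\Vert p_1 p_1'\Vert + \alpha_f\Vert p_1' p''\Vert$, and $\alpha_e\Vert p_1 v\Vert$. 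Crucially, Lemma~\ref{lem:onecritinbetween} forbids two consecutive critical reflections along a geodesic without an intervening vertex, so a critically reflected ray cannot itself spawn a further critical point of entry before reaching a vertex; this justifies treating $r'$ and $r''$ as purely refracting rays downstream.

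Third I would check the Steiner family. Letting $\theta$ be the refraction angle of the ray $r_3$ adjacent to $r_1$ (the last ray that refracts rather than reflects), the angular interval with respect to $e$ that no refracted ray covers is $[0, \tfrac{\pi}{2}-\theta]$; the procedure populates $[\tfrac{\epsilon'}{K}, \tfrac{\pi}{2}-\theta]$ with Steiner rays from $p_1$ onto $f'$, the extremal ray $r_4$ making angle $\tfrac{\epsilon'}{K}$ with $e$. I would argue that this interval is added to $\cRay(p_1)$ in angle-sorted order, that it is empty precisely when $\tfrac{\pi}{2}-\theta < \tfrac{\epsilon'}{K}$ (matching the procedure's guard), and that designating $r_4$ and $r_2 = r_1.sibling$ as initial siblings correctly brackets the refracted continuation of any optimal path. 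The sibling-maintenance step is then routine: if $r_2$ and $r_4$ strike the same edge $e_2 = e_4$ they remain siblings and we update distances to its endpoints; otherwise \emph{findSplitRays} recomputes, among the rays of $\cTray(w)$ lying between $r_2$ and $r_4$, those closest to each endpoint of $e_2$ and $e_4$, which are exactly the siblings $r'$ and $r'_{sib}$ of the earlier definition, establishing the second clause.

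The main obstacle I anticipate is the interface between this bookkeeping and the density requirement: one must confirm that the angular spacing of the rays in $S$, together with the innermost angle $\tfrac{\epsilon'}{K}$ and the parallel seed $r_4$, meets the uniform bound $\frac{w\epsilon'}{2W}(\frac{\epsilon'}{K})^{n^2}$ demanded by Lemma~\ref{thm:weisprefrnoncritical}, and that no angular wedge adjacent to $seg$ is left uncovered. Establishing this amounts to verifying that the Steiner cone's extremal ray $r_4$ and the critically reflected segment jointly tile the full angular range around $p_1$ that any geodesic through $p_1$ can occupy, so that every optimal continuation is bracketed by a traced sibling pair; once this coverage is in place, the remaining assertions follow immediately from the definitions of $\cRay(\cdot)$, of sibling rays, and of the event distances, and the approximation inherited from Lemmas~\ref{thm:weisprefrnoncritical} and~\ref{thm:weisprefrcritical} applies without change.
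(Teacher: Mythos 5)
Your proposal is correct and follows essentially the same route as the paper: the paper states this lemma without a separate proof, treating it as immediate from the construction in procedure \emph{initiateCriticalSource} (the new tree node created as a child of the origin $u$ in $\cTray(w)$, the parallel critically reflected siblings justified by Lemma~\ref{lem:crit} together with Lemma~\ref{lem:onecritinbetween} ruling out further critical points, the Steiner cone over $[\frac{\epsilon'}{K}, \frac{\pi}{2}-\theta]$ with $r_2$ and $r_4$ designated as siblings, and the hand-off to \emph{findSplitRays} when the siblings split), which is exactly the stage-by-stage walkthrough you give. Your closing concern about verifying that the Steiner rays meet the angular density bound is also consistent with the paper's structure, which defers that numerical guarantee entirely to Lemma~\ref{thm:weisprefrnoncritical} rather than re-establishing it inside this lemma.
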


\subsection{Procedure refractRay}

In procedure {\it refractRay}, a subset  of rays in $\cTray(s)$ that strike an edge $e_i$ 
before traversing a face $f$ are refracted.
These rays lie in between $r'$ and its sibling $r''$ that strike $e_i$.
Let $e_i, e_j, e_k$ be the edges of $f$.
And let $v$ be common point to $e_j$ and $e_k$.

Suppose $r$ and $r'$, when refracted, strike different edges of the 
face, say $e_j$ and $e_k$, respectively.
If $r$ and $r'$ have originated from a critical segment source, ${\cal CS}$ 
then a ray $r_c$ originating from ${\cal CS}$ and that strikes $v$ is computed. 
Also are determined the siblings of  $r$ and $r'$, which would be the rays adjacent to $r_c$ in $\cTray({\cal CS})$.

Alternatively, $r$ and $r'$ originated from a vertex source $w$.
Then the ray in $\cTray(w)$ that strikes $v$ is computed and the siblings of $r'$ and $r''$,  $r_1$ and $r_2$ respectively, are determined. 
This is done using procedure {\it findSplit Rays}.
The procedure {\it findSplitRays} proceeds as follows:
The critical ancestor paths, $P'$ and $P''$ of  $r'$ and $r''$ are determined.
The siblings must lie in the set of rays generated from vertices along the paths
$P'$ and $P''$. These set of rays are searched via binary search.
The events at which these rays strike $e_i$ 
and $e_j$, respectively, are computed.

\begin{lemma}
\label{lemma:refractRay}
Procedure {\it refractRay} correctly processes refracted rays across a face $f$ and updates sibling rays incident onto edges  of the face $f$. 
\end{lemma}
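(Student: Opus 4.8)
The plan is to split the argument to match the two assertions in the statement: that refraction across $f$ is carried out consistently with the geodesic structure, and that the sibling invariant is preserved. First I would fix precisely the invariant that \emph{refractRay} must maintain. On entry, $r'$ and its sibling $r''$ are the two rays of $R(e_i,w)$ whose intersection points on $e_i$ are closest to the two endpoints of $e_i$, and every intermediate ray between them shares its edge subsequence $\cE(\cdot,e_i)$ with $r'$ or with $r''$; on exit, the procedure must output, for each edge of $f$ that is actually struck, a sibling pair with the analogous extremal property. The first assertion is then essentially by construction: each ray incident on $e_i$ from $f_i$ at less than the critical angle is propagated onto $f$ using Snell's refraction formula, so by Lemma~\ref{lem:crit} the produced segment is exactly the one a geodesic crossing $e_i$ would follow.

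The second assertion I would establish by a case analysis on whether the refracted images of $r'$ and $r''$ strike the same edge of $f$. Writing $e_j,e_k$ for the other two edges and $v=e_j\cap e_k$, the key geometric fact is that, for a triangle, the exit edge of a ray entering through $e_i$ is determined solely by which side of the direction ``toward $v$'' the ray lies; hence the assignment of exit edge is monotone along the angular ordering of the rays. In the no-split case (both images strike, say, $e_j$) I would combine this monotonicity with Property~\ref{prop-adj} to argue that every intermediate ray also exits through $e_j$, so $r'$ and $r''$ remain the extreme rays and stay siblings and the pushed strike events are correct. Because each untraced intermediate ray shares an edge subsequence with a sibling, no shortest-path candidate is lost by not tracing it.

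The substantive case is the split, handled by \emph{findSplitRays}, and this is where I expect the main difficulty. Here some intermediate ray $r_c$ separates the rays exiting through $e_j$ from those exiting through $e_k$ (it is the ray through $v$), and the new siblings are the rays adjacent to $r_c$ on $e_j$ and on $e_k$. I would argue correctness in three steps: (i) since $r_c$ lies between $r'$ and $r''$, the nested-edge-sequence property forces the origin node of $r_c$ in $\cTray(w)$ to lie on the critical ancestor path $P'$ of $r'$ or $P''$ of $r''$, which is exactly the set of origins the procedure searches; (ii) along each such path the rays are ordered so that the exit-edge assignment, equivalently the signed distance of the intersection point from a fixed endpoint of $e_i$, is monotone, which validates the binary search that locates $r_c$ and its two neighbors $r_1,r_2$; and (iii) by the definition of siblings, the refracted $r'$ together with $r_1$ form the sibling pair on $e_j$ while the refracted $r''$ together with $r_2$ form the sibling pair on $e_k$, after which the corresponding strike events are pushed.

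The hard part will be step (ii): the rays in $R(e_i,w)$ emanate from several distinct origin nodes and need not be parallel, so the monotonicity of the exit-edge assignment along the ancestor paths is not immediate. I would reduce it to Property~\ref{prop-adj}, which guarantees that consecutive rays are either parallel or share an origin node, together with the fact that refraction across a single edge preserves the order of intersection points; chaining these across the $O(n^2)$ edges of the edge sequence (Lemma~\ref{lem:edgeseqlen}) yields a well-defined monotone order along $P'$ and $P''$ on which the binary search is correct. Combining the no-split and split cases then shows that \emph{refractRay} preserves the sibling invariant across a single face crossing and pushes the correct events, which is the claim.
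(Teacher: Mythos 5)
Your proposal is sound on the branch it covers, and it is in fact more explicit than the paper itself, which states this lemma with no proof and treats it as immediate from the procedure description; in particular, your isolation of the monotonicity of the exit-edge assignment (along the angular order and the critical ancestor paths) as the fact that legitimizes the binary search in \emph{findSplitRays} is precisely the assumption the paper leaves implicit and never discharges. Your entry/exit invariant, the Snell's-law step via Lemma~\ref{lem:crit}, and the split/no-split case analysis mirror the structure of \emph{refractRay} faithfully.

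However, there is a genuine gap: in the split case, \emph{refractRay} as specified in the paper branches on the origin of the sibling pair. If $r'$ and $r''$ originated from the same critical segment ${\cal CS}$, the procedure does not call \emph{findSplitRays} at all; it calls \emph{findCriticallyReflectedRay}$(r', r'', v)$. Your proof handles only the \emph{findSplitRays} branch, and your argument does not transfer to the other one: critically reflected rays from ${\cal CS}$ are parallel rather than fanned from a common origin, they have no critical ancestor path to search (a critically reflected ray generates no further critical points of entry, by Lemma~\ref{lem:onecritinbetween}), and the new siblings are not rays adjacent to the splitting ray in a tree but are freshly constructed parallel rays, with the same edge sequence as $r'$ and $r''$, placed at weighted distance $\delta < \epsilon'$ on either side of the ray from ${\cal CS}$ through $v$. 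Correctness of that branch rests on a different and simpler argument --- the existence of a parallel ray through $v$ whose edge sequence agrees with that of $r'$, so that the distance to $v$ via ${\cal CS}$ is correctly estimated, together with the $\delta$ spacing of the constructed siblings --- which your proposal never makes. Since the lemma asserts correctness of the whole procedure, this case must be included; once it is, your treatment of the vertex-source branch (with the monotonicity claim of your step (ii) made precise) would complete the proof.
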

\begin{algorithm}
\caption{refractRay($r', e_i$)}
\begin{algorithmic}[1]
\REQUIRE{The ray $r'$ which incidents on edge $e_i$ at an angle less than $\theta_{ci}$.  
Let $r'.sibling = r''$ on the same edge $e_i$.}
\ENSURE{The rays $r'$ and $r'.sibling$ are traced further.}
\vspace*{.1in}
\STATE{Let $f$ comprising edges $e_i$, $e_j$, $e_k$ be the face over which the rays $r'$ and $r''$ are to be traced}
\IF{ $r'$ and $r''$ traced forward are incident onto different edges $e_j$ and $e_k$} 
	\IF{$r'$ and $r''$ are originated from the same critical segment}
		\STATE findCriticallyReflectedRay($r', r'', v$) where $v=e_j \cap e_k$.
	\ELSE
		\STATE findSplitRays($r', r'', e_i, e_j$)
	\ENDIF
\ELSE
\STATE Push events corresponding to rays $r'$ and $r''$ striking edges $e_j$ and $e_k$. 
\ENDIF
\end{algorithmic}
\end{algorithm}

\begin{algorithm}
\caption{findSplitRays($r', r'', e_i, e_j$)}
\begin{algorithmic}[1]
\REQUIRE{
For the rays $r'$ and $r'.sibling=r''$, such that $r',r'' \in \cTray(w)$, let $v'$ be the least common ancestor of $r'$ and $r''$.
Let $P'$ be the critical ancestor path of ray $r'$ with $v'$ as the origin.
Let $r'$ be incident to edge $e_i$ at point $p'$ and $r''$  be incident to edge $e_j$ at point $p''$, for $v = e_i \cap e_j$.
Similarly, let $P''$ be the critical ancestor path of ray $r''$ with $v'$ as the origin.
Let $REG$ be the region bounded by the rays in the critical ancestor path of $r'$ and the critical ancestor path of $r''$, 
rays  $r'$, $r''$, and the segment $p'p''$.
Let $R_1$ be the set of rays such that a ray $r \in R_1$ if and only if $r$ originates from a critical point of entry/vertex located on the critical ancestor path $P'$ and ray $r$ lies in the open region $REG$.
Similarly let $R_2$ be the set of rays such that a ray $r \in R_2$ if and only if $r$ originates from a critical point of entry/vertex located on the critical ancestor path $P''$ and ray $r$ lies in the open region $REG$.
See Fig. \ref{fig:binsrchvrtsrc}(a).
}
\ENSURE{
Finds two rays $r_1, r_2$ which intersect edges $e_i, e_j$ respectively such that among all the rays in $R_1$ and $R_2$, the rays $r_1, r_2$ are the closest rays to vertex $v$ along the edges $e_i$ and $e_j$ respectively.} 
\vspace*{.1in}
\STATE Do binary search of the rays in $R_1$ and the rays in $R_2$ to find two rays $r_1$ and $r_2$ where $r_1$ intersects $e_i$ and $r_2$ intersects $e_j$ such that $r_1$ and $r_2$ are either successive rays originating from the same origin or adjacent origins on a critical ancestor path.
Set $r_1, r'$ and $r'', r_2$ as siblings.
Push two events to the heap where one represents the intersection of ray $r_1$ with  $e_i$ and the other represents the interaction of  ray $r_2$ with  $e_j$.
\STATE Let rays $r_1, r_2$ intersect edges $e_i, e_j$ respectively at points $p_1, p_2$.
Then $\min(d_w(s,p_1)+ \alpha_ed(p_1v), d_w(s,p_2)+ \alpha_e d(p_2,v)$, 
is the event required to be update the shortest path to $v$ in the event heap.
\end{algorithmic}
\end{algorithm}

\subsection{Procedure findCriticallyReflectedRay}
\def\seg{{\cal CS}}
This procedure is invoked in the previous procedure when the ray to be refracted arises from a critical segment $\seg$.
Consider when two critically reflected sibling rays $r', r''$ from the same critical segment $\seg$ are propagated across a face and intersect  two distinct edges $e_i, e_j$ respectively. 
This procedure finds new rays $r'_{sib}$ and $r''_{sib}$ which are siblings to $r', r''$ respectively so that $r'_{sib}$ and $r''_{sib}$ originate from $\seg$.

\begin{algorithm}
\caption{findCriticallyReflectedRay($r', r'', v$)}
\begin{algorithmic}[1]
\label{algo:findtwoclosest}
\REQUIRE{Two critically reflected rays $r', r''$ originating from the same critical segment, say $\seg$, with $r'.sibling=r''$; $r'$ incidents on edge $e_i$ and $r''$ incidents on edge $e_j$ for $v = e_i \cap e_j$.
See Fig. \ref{fig:reflcritsrc}.}
\ENSURE{Finds a critically reflected ray $r$ originated from a point on $\seg$ such that $r$ incidents to $v$.  
Also, finds new siblings $r'_{sib}$ and $r''_{sib}$ for $r'$ and $r''$.}
\vspace*{.1in}
\STATE 
Let $p'=origin(r')$ and $p''=origin(r'')$.
Let $l$ be the length of the edge sequence intersected by $r'$ i.e., $l=|\cE(r)|$.
Consider a critically reflected ray $r$ such that $r=origin(p)$ on the line segment $p'p''$ which passes through $v$, is parallel to $r'$ and  $\cE (r)$ is the same as the first $l-1$ elements of $\cE(r')$.
Update event in heap corresponding to $d_w(\seg, v)$, the weighted distance
from $\seg$ to $v$.
\STATE 
Construct the ray $r'_{sib}$, $r'_{sib}=origin(p'_{sib})$ where $p'_{sib}$ is
located on the line segment $p'p''$ such that \\
(i) $r'_{sib}$ is parallel to $r'$\\ 
(ii) $\cE (r')$ is the same as $\cE (r'_{sib})$ \\
(iii) $d(r,r'_{sib}) = \delta$ (for $\delta < \epsilon'$).\\
And construct the ray $r''_{sib}$ be the mirror image of $r'_{sib}$ w.r.t. $r$, i.e. $r''_{sib}=origin(p''_{sib})$ where $p''_{sib}$ is
located on the line segment $p'p''$ such that \\
(i) $r''_{sib}$ is parallel to $r''$\\ 
(ii) $\cE (r'')$ is the same as $\cE (r''_{sib})$ \\
(iii) $d(r,r''_{sib}) = \delta$ (for $\delta < \epsilon'$).\\
Set $r', r'_{sib}$ and $r'', r''_{sib}$ as siblings and compute
current estimates of the shortest path to the endpoints of $e_i$ and $e_j$,
via the sibling rays.
\STATE
Push events to the heap which represents the weighted distance at which
ray $r'_{sib}$ intersects $e_i$ and the ray $r''_{sib}$ intersects $e_j$.
\end{algorithmic}
\end{algorithm}

\subsection{Procedure findCriticalPointOfEntry}
While propagating a ray it may happen that $r'$ and $r''$ are  two sibling rays
that originate from the same point and  $r'$ is incident to $e_i$ at an angle 
less than the critical angle while $r''$ incidents to $e_i$ at an angle greater than the critical angle.
Procedure {\it findCriticalPointOfEntry} handles this case and uses the procedures {\it wrapperToFindHitAtAngle} and {\it Find-Hit-At-Angle} together 
to find the ray $r$ which incidents at the critical angle on edge $e_i$ 
such that  the ray $r$ traverses across the same edge sequence as $r'$ and $r$ originates from $v$.

Since a vertex must exist between two critical points along a shortest path, we do not progress a ray $r$ striking an edge at an angle greater than or equal to critical angle whenever $r$ is originated from a critical segment (see procedure {\it wrapperToFindHitAtAngle}).
For the sake of completion, we have listed the 
procedure {\it Find-Hit-At-Angle} from \cite{Mitchell91} along with {\it wrapperToFindHitAtAngle}in the Appendix.

\begin{figure}
\center{
\subfigure[Sibling rays are separated by a vertex $v$]{\epsfysize=200pt \epsfbox{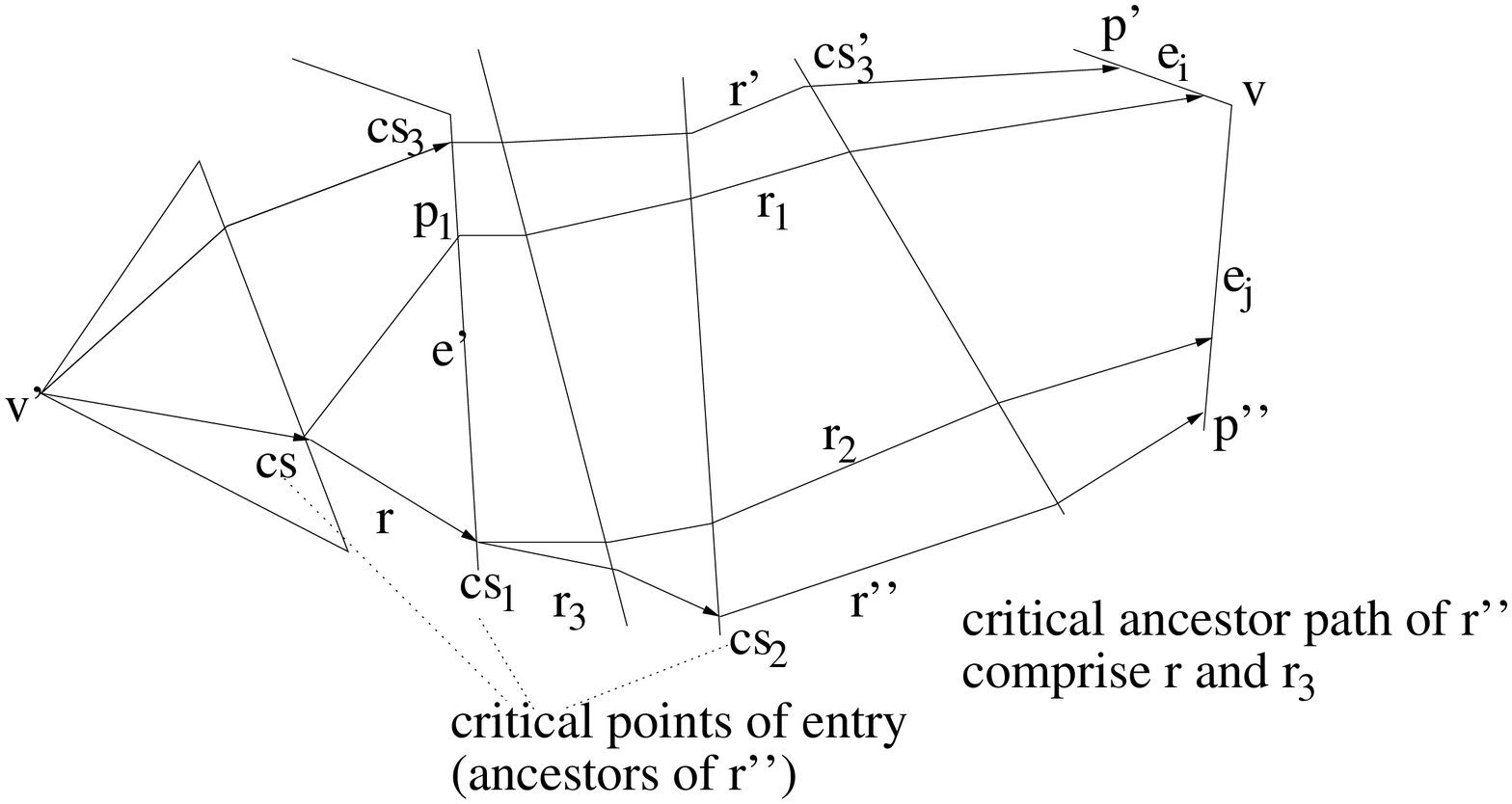}}
\subfigure[One sibling ray incidents at an angle less than the critical angle and the other incidents at an angle greater than or equal to the critical angle on $e_i$]{\epsfysize=200pt \epsfbox{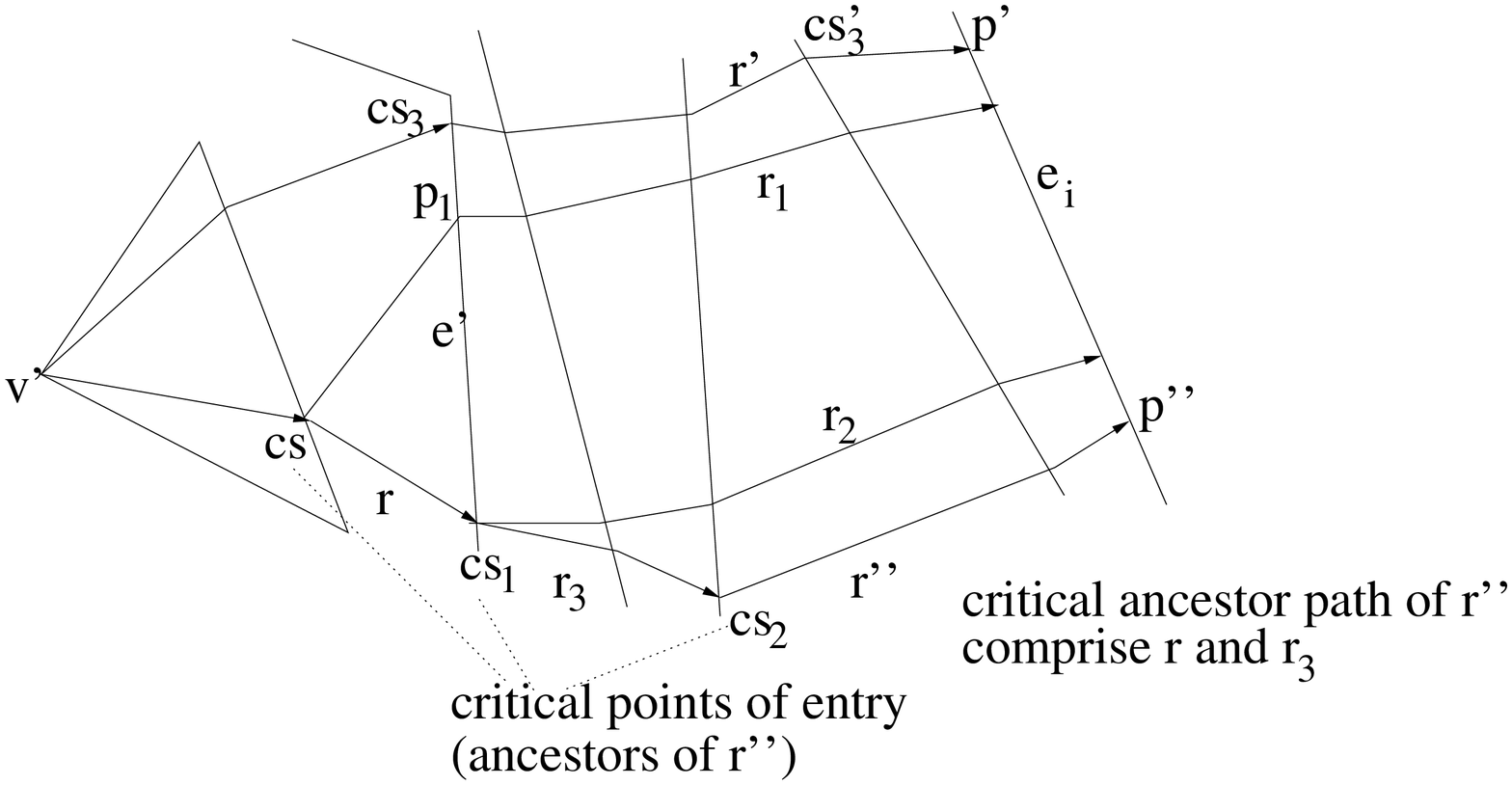}}
}
\caption{\label{fig:binsrchvrtsrc} Finding sibling rays using critical ancestor paths}
\end{figure}

\begin{figure}
\centerline{\epsfysize=200pt \epsfbox{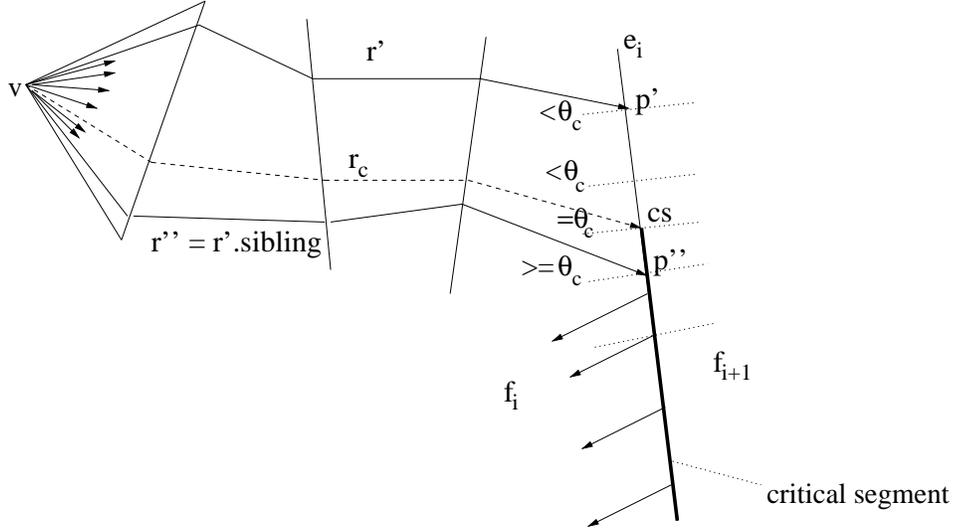}}
\caption{\label{fig:critangray} Finding the ray which incidents at critical angle on edge $e_i$}
\end{figure}

\begin{algorithm}
\caption{findCriticalPointOfEntry($r', e_i$)}
\begin{algorithmic}[1]
\REQUIRE{Ray $r'$ which incidents at point $p'$ located on edge $e_i=f_i \cap f_{i+1}$ with an angle of incidence less than or equal to critical, and, ray $r'.sibling=r''$ which incidents at point $p''$ of edge $e_i$ with an angle of incidence greater than or equal to critical. 
See Fig. \ref{fig:binsrchvrtsrc}(b).
}
\ENSURE{Finds a ray that originates from a critical source/vertex and incidents to a point $p$ located on the line segment $p'p''$ at an angle 
equal to the critical angle.}
\vspace*{.1in}
\STATE Let the sets $R_1$ and $R_2$ be defined as in procedure $findSplitRays$.
Do binary search on the critical sources defining the rays in $R_1$ and 
in $R_2$ and subsequently on the rays to find two rays $r_1$ and $r_2$ where $r_1$ incidents on $e_i$ at 
an angle less than or equal to critical and $r_2$ incidents on $e_i$ 
at an angle greater than or equal to critical.
\IF{$origin(r_1) = origin(r_2)$}
	\STATE wrapperToFindHitAtAngle($r_1, r_2, e_i, origin(r_1)$)
\ELSE
	\STATE Suppose $origin(r_2)$ is an intermediate vertex on the  critical ancestor path of $r''$ (the other case is symmetric).
And, let $r$ be the ray with origin same as $origin(r_1)$
that led to the critical source, $origin(r_2)$.
Then we invoke wrapperToFindHitAtAngle($r_1, r, e_i, origin(r_1)$)
\ENDIF
\end{algorithmic}
\end{algorithm}

\section{Correctness and Analysis}
\label{sect:weispcorranal}
\begin{lemma}
\label{type-1find}
The algorithm correctly determines sibling rays in $\cTray (u)$ for 
every 
source $u$, and computes an $\epsilon$-approximation to a Type-1 shortest path from $u$ to $v$ for any two  sources $u, v$.
\end{lemma}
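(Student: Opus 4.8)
The plan is to combine the conditional approximation guarantee of Lemma~\ref{thm:weisprefrnoncritical} with the procedure-level correctness lemmas (Lemmas~\ref{lemma:initiateVertexSource},~\ref{lemma:initiateCriticalSource}, and~\ref{lemma:refractRay}), proceeding by induction on the order in which events are removed from the event heap. The first assertion---that sibling rays are correctly determined---is the algorithmic invariant to be maintained, while the second---the $\epsilon$-approximation of a Type-1 path---then follows by invoking Lemma~\ref{thm:weisprefrnoncritical} once the invariant and its angular-density hypothesis are in place.

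First I would verify that the algorithm lays down rays densely enough to meet the hypothesis of Lemma~\ref{thm:weisprefrnoncritical}, namely that successive rays in $\cTray(v)$ subtend an angle at most $\frac{w\epsilon'}{2W}(\frac{\epsilon'}{K})^{n^2}$. The initial rays fanned out from each source $v$ are uniformly spaced, so their number is chosen to force this bound directly; the Steiner rays emitted from a critical point of entry by \emph{initiateCriticalSource} uniformly subdivide the cone between $r_{s1}$ and $(p_1,v')$ with the same spacing, and the $\epsilon'/K$ lower bound on the extremal Steiner angle is exactly the quantity used to bound $\beta \le K/\epsilon'$ in part~(b) of Lemma~\ref{thm:weisprefrnoncritical}. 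Thus every pair of adjacent rays in $\cTray(v)$---whether sharing a vertex origin or a critical-point origin---meets the density requirement, and Property~\ref{prop-adj} guarantees there are no other adjacency types to check.

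Next I would run the induction on event-extraction order (equivalently, on nondecreasing weighted distance from the source). The base case is the processing of $s$ itself: by Lemma~\ref{lemma:initiateVertexSource}, \emph{initiateVertexSource} builds $\cTray(s)$ with the correct sibling pairs on each edge. For the inductive step the extracted event is of exactly one kind, and the invariant ``the traced rays are precisely the sibling pairs bracketing each edge-intersection, and every intermediate untraced ray shares the edge sequence of one of its two brackets'' is preserved: a refraction event is handled by \emph{refractRay} (Lemma~\ref{lemma:refractRay}), which, when a sibling pair splits across two edges of a face, recomputes new siblings via \emph{findSplitRays} by binary-searching the critical ancestor paths; a critical-point event is handled by \emph{initiateCriticalSource} (Lemma~\ref{lemma:initiateCriticalSource}), which installs the new node, its ray set, and its siblings; and a vertex event discovered during propagation triggers a fresh call to \emph{initiateVertexSource}. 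The untraced-ray invariant is precisely what licenses tracing only siblings while still reasoning about the full $\cTray(v)$.

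With the invariant established, the approximation follows: for the target vertex $v$, part~(c) of Lemma~\ref{thm:weisprefrnoncritical} shows $v$ lies within weighted distance $d\epsilon'$ of the traced sibling ray in $\cTray(u)$ closest to $v$ on the last edge, and the error accumulated over the $O(n)$ vertices and $O(n^2)$ refractions along a Type-1 path is at most $d\,c\,n\,\epsilon'(n^2+LW) \le d\epsilon$ by the choice $\epsilon' = \frac{\epsilon}{n(n^2+LW)}$. Since the distance recorded to $v$ in the heap is the minimum over exactly these sibling rays (as updated in \emph{initiateVertexSource}, \emph{findSplitRays}, and \emph{initiateCriticalSource}), the computed $d_w(u,v)$ is within a factor $(1+\epsilon)$ of optimal. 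The main obstacle I anticipate is the inductive maintenance of the untraced-ray invariant across split operations: I must show that after a split each intermediate ray between the two newly chosen siblings still duplicates the edge sequence of one sibling, so that no shortest Type-1 path can pass between the siblings without being approximated to within weighted distance $O(\epsilon')$ by a traced ray---this is where the binary search over the critical ancestor paths in \emph{findSplitRays}, together with Property~\ref{prop-adj}, must be shown to locate the correct bracketing pair rather than merely some pair.
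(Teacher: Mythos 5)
Your architecture coincides with the paper's: reduce the lemma to the correct maintenance of sibling rays (Lemmas \ref{lemma:initiateVertexSource} and \ref{lemma:refractRay} for initialization and refraction), then apply Lemma \ref{thm:weisprefrnoncritical}, distinguishing whether the optimal path runs between adjacent traced rays or the target vertex merely lies within weighted distance $d\epsilon'$ of a traced ray, with the error budget controlled by (\ref{eq:xinbeta}) and (\ref{eq:13}). The explicit induction on heap-extraction order and the up-front verification of the angular-density hypothesis are presentational refinements of what the paper does, not a different route.

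The genuine gap is that the step you defer in your closing sentence --- showing that the binary search in \emph{findSplitRays} over critical ancestor paths returns a correctly bracketing pair rather than merely some pair --- is not an ``anticipated obstacle'' but the actual substance of the paper's proof, and your proposal never supplies the argument. The paper closes it by a case split on the origins of the two rays $r_1, r_2$ produced by the search. If $origin(r_1)=origin(r_2)$, the bounds (\ref{eq:xinbeta}) and (\ref{eq:13}) apply directly. If $origin(r_1)\ne origin(r_2)$, the paper observes that the ray $r$ emanating from $origin(r_1)$ which created the critical source $origin(r_2)$ has $r_1$ as its successor; by the Steiner-ray construction in \emph{initiateCriticalSource} (Lemma \ref{lemma:initiateCriticalSource}), $r_1$ is parallel to $r_2$ starting at the point $p_1$ where $r_1$ crosses the edge containing $origin(r_2)$; the divergence bound (\ref{eq:xinbeta}) then bounds the distance between $p_1$ and $origin(r_2)$, and together with (\ref{eq:13}) this yields the $\epsilon$-approximation even though the two bracketing rays come from different nodes of the same tree of rays. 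This is precisely the instantiation of Property \ref{prop-adj} (adjacent rays are parallel or share an origin) that your invariant needs in order to survive a split; without it, your claim that every intermediate untraced ray still duplicates the edge sequence of one of the new siblings remains an assertion, and the lemma's first clause --- hence also its second --- is unproved.
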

\begin{proof}
Suppose that there exists a Type 1 shortest path between $u$ and $v$,
traversing a sequence of edges ${\cal E}$.
Lemma~\ref{thm:weisprefrnoncritical} shows that this path can be found
using sibling rays.
We show that the algorithm maintains sibling rays for edges in
${\cal E}$ that enables the determinations of traced rays
such that these rays can be used to find an $\epsilon$-approximate shortest path from $u$ to $v$.
First consider the case when the shortest path  lies in between two
sequence ${\cal E}$. These adjacent rays will be found from sibling rays
by the algorithm as 
shown below. The second case occurs when the shortest path lies 
outside the set of rays, $R$. In the second case, the vertex $v$ 
must be less than $d_w(v)\epsilon'$ from a ray (in fact a traced sibling ray) 
in $R$, since two points
at distance $d$ on adjacent rays, 
are separated by a distance of atmost $d\epsilon'$ 
(due to \ref{eq:xinbeta}  in Lemma~\ref{thm:weisprefrnoncritical}).
The algorithm computes the  estimate of the shortest distance to
a vertex $v$ whenever  sibling
rays are computed at every edge incident to $v$.

It thus suffices to show that sibling rays are computed correctly.
Initially the siblings are computed correctly in procedure {\it initiateVertexSource} (Lemma \ref{lemma:initiateVertexSource}) 
and each ray is refracted correctly in procedure {\it refractRay} 
(Lemma \ref{lemma:refractRay}).
We consider the other procedures whose correctness is not immediate from their descriptions.
Consider the procedure {\it findSplitRays} which computes the siblings
$r_1$ and $r_2$ closest to a  vertex $v$ at some stage in the algorithm.
The rays $r_1$ and $r_2$ are thus adjacent rays and the distance to $v$
via the rays $r_1$ and $r_2$ is updated.
Suppose both the rays $r_1$ and $r_2$ have the same origin.
Recall that Procedure {\it findSplitRays} finds the siblings and
due to (\ref{eq:xinbeta}) and (\ref{eq:13}), an $\epsilon$-approximation to the shortest distance from $origin(r')$ to $v$ is ensured.
Otherwise, $origin(r_1) \ne origin(r_2)$.
In this case procedure {\it findSplitRays} finds the siblings via
a binary search on the rays in the critical ancestor paths of two
siblings, say $r'$ and $r''$ such that rays $r_1$ and $r_2$  are part of
the set of rays lying in between $r'$ and $r''$.
Let $r$ be the ray which initiated $origin(r_2)$.
And let ray $r$ has as its origin $p=origin(r_1)$.
Also, we know that the ray $r_1$ is a successor to ray $r$ since it is adjacent to $r_2$.
Let $origin(r_2)$ be located on an edge $e'$.
Suppose $r_1$ intersects  $e'$ at point $p_1$.
See Fig. \ref{fig:binsrchvrtsrc}(a).
As described , by construction (procedure {\it initiateCriticalSource} and 
Lemma \ref{lemma:initiateCriticalSource}), ray $r_1$ is parallel to $r_2$, 
starting at $p_1$.
Due to (\ref{eq:xinbeta}), the distance between $p_1$ and $origin(r_2)$ is bounded.
These facts, together with (\ref{eq:13}), guarantee the $\epsilon$-approximation in computing the shortest distance from $origin(r')$ to $v$.

Note that {\it initiateCriticalSource} requires
the procedure {\it findCriticalPointOfEntry} whose correctness is clear from
the description.
\end{proof}

\ignore{
\begin{cor}
\label{cor:Steiner}
For two rays $r', r''$ traversing along a face $f$ such that $v'$ is the origin of the 
critical ancestor path of either $r'$ or $r''$, let $p$ be any point located in the 
connected region defined by the boundary of $f$ line segments $f \cap r'$ and $f \cap r''$ 
together. 
Then an $\epsilon$-approximation can be found for the shortest path from the 
source $u$ to $p$.
\end{cor}
\vspace*{-.10in}
\begin{proof} 
The proof follows from the way in which the Steiner rays are originated in procedure {\it initiateCriticalSource}.
\end{proof}
}

\ignore{
\begin{theorem}
\label{thm:weisprefrcritical}
Let $f$ be a face through which a Type 2 critically reflected sibling rays $r', r''$ traverse, consider any point $p'''$ located in the connected region defined by the boundary of $f$, line segments $f \cap r'$ and $f \cap r''$ together.
Then an $\epsilon$-approximation can be found for the shortest path from the source $u$ to $p'''$.
\end{theorem}
\vspace*{-.10in}
\begin{proof}
Suppose rays $r', r''$ have originated from a critical segment, $seg$.
See Fig. \ref{fig:reflcritsrc}.
Also, let the critical segment $seg$ is due to the critical point of entry $cs$.
Let $P$ be the maximal suffix of optimal path $opt(cs, p''')$ from $cs$ to $p'''$ such that it traverses across the same edge sequence as rays $r', r''$. 
We know that  $P$ is parallel to rays $r'$ and $r''$, and $P$ originates at a point located in between the $origin(r')$ and $origin(r'')$.
The procedure {\it findCriticallyReflectedRay} precisely computes an approximation of path $opt(cs, p''')$ by computing $r'_{sib}$ and $r''_{sib}$.
Further, we choose the values of $\delta'$ (in procedure {\it initiateCriticalSource}) and $\delta''$ (in procedure {\it findCriticallyReflectedRay}) to be less than $\epsilon'$.
Including the error in computing the shortest distance between $s$ to $cs$ with (\ref{eq:13}) and the error in finding $p$, yields an $\epsilon$-approximation of the required shortest path.
\end{lemma}
}
\begin{lemma}
\label{thm:weisprefrcritical-find}
Let $P$ be a Type-2  shortest path from a vertex $v$ to another vertex $w$ on $\PS$ with a critical segment ${\cal CS}$ in-between.
Then the algorithm determines a pair of traced rays  in $\cRay (v)$ 
that can approximate the shortest path, $P$. 
\end{lemma}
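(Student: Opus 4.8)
The plan is to mirror the structure of Lemma~\ref{type-1find}, reducing the Type-2 case to the Type-1 finding machinery together with the two procedures dedicated to critical sources. By Lemma~\ref{lem:onecritinbetween}, a geodesic contains a vertex between any critical point of exit and the next critical point of entry, so I would first split $P$ at its critical segment ${\cal CS}$ into a prefix $P_1$ running from $v$ to the critical point of entry $cs$ and a suffix $P_2$ running from ${\cal CS}$ to $w$. Lemma~\ref{thm:weisprefrcritical} already guarantees that such a decomposition admits an $\epsilon$-approximation built from sibling rays; what remains is to certify that the algorithm's event handling actually traces the relevant sibling rays at each stage and hence realizes that approximation.

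For the prefix $P_1$, I would argue exactly as in Lemma~\ref{type-1find}: since $P_1$ only refracts before striking the edge carrying ${\cal CS}$ at the critical angle, the sibling rays maintained by {\it initiateVertexSource} and {\it refractRay} (Lemmas~\ref{lemma:initiateVertexSource} and~\ref{lemma:refractRay}) bracket $P_1$ along its edge sequence, with divergence controlled by the angular bound of Lemma~\ref{thm:weisprefrnoncritical}. The new ingredient is detecting where $P_1$ terminates: when a maintained sibling pair $r', r''$ straddles the critical angle on the relevant edge, procedure {\it findCriticalPointOfEntry} performs a binary search over the critical ancestor paths of the two siblings to locate the ray striking at exactly the critical angle, producing $cs$. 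I would verify that this point lies within the required distance of the true entry point of $P_1$, so that the event pushed for $cs$ carries a distance estimate that approximates $d_w(v,cs)$ to a $(1+\epsilon')$ factor, the accumulated error being the ${\cal Y}$ term of equation~(\ref{eq:12}).

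For the suffix $P_2$, I would appeal to Lemma~\ref{lemma:initiateCriticalSource}: once the event for $cs$ is popped from the heap, {\it initiateCriticalSource} installs $cs$ as a new origin in the tree, emits the Steiner rays that fill the angular gap on the far face, and launches the family of parallel critically reflected rays from ${\cal CS}$, tracked by sibling pairs whose origins are $\delta$-separated along ${\cal CS}$ with $\delta \le \epsilon'$. The initial segment of $P_2$ is parallel to and lies between two adjacent such rays, and {\it findCriticallyReflectedRay} maintains the correct sibling pair as these parallel rays cross successive faces. Beyond its reflection, $P_2$ becomes an ordinary refracting (Type-1) path emanating from a source on ${\cal CS}$, so the finding argument of Lemma~\ref{type-1find} applies to approximate it by sibling rays, with $w$ reached within error $LW\epsilon'$ as in equation~(\ref{eq:13}).

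The main obstacle I expect is the bookkeeping around the critical source rather than any fresh geometry: I must confirm that {\it findCriticalPointOfEntry} and {\it findSplitRays} still bracket the true path correctly when the two candidate rays $r_1, r_2$ originate from different nodes along a critical ancestor path, so that the siblings are adjacent-origin rays rather than angularly adjacent rays, and that the Steiner rays guarantee no portion of the wavefront on the far face is lost to the refraction gap, which is precisely the property established in part~(b) of the proof of Lemma~\ref{thm:weisprefrnoncritical}. Granting these, I would combine the prefix error ${\cal Y} \le d c_1\epsilon' n^2$, the reflected-ray error $\delta + LW\epsilon'$, and the suffix error as in the proof of Lemma~\ref{thm:weisprefrcritical}, and note that with at most $O(n)$ vertices on $P$ the total stays within $d\epsilon$, so the traced sibling rays in $\cRay(v)$ approximate $P$ to the desired factor.
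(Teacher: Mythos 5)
Your proposal is correct and follows essentially the same route as the paper's own proof: the same decomposition of $P$ at ${\cal CS}$ into a prefix handled by the Type-1 machinery of Lemma~\ref{type-1find} (with {\it findCriticalPointOfEntry} locating $cs$) and a suffix handled by {\it initiateCriticalSource} and {\it findCriticallyReflectedRay}, with the error terms combined exactly as in Lemma~\ref{thm:weisprefrcritical}. The only differences are cosmetic (order of treating prefix versus suffix, and your somewhat more explicit bookkeeping of the binary-search cases), so no gap to report.
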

\begin{proof}
The proof of lemma~\ref{thm:weisprefrcritical}
shows  that sibling rays in $\cTray(v)$ and sibling rays
originating from ${\cal CS}$ will determine
the approximate shortest path.
The optimum path $P$ can be partitioned into two,
a path  $P_1$ from $v$ to $\cseg \in \cK$ and a path $P_2$ from $\cseg$ to $w$.
Let $e$ be the edge on a face $f$ such that the critical 
segment $\cseg$ lies  on $e$ and reflects rays back onto face $f$. 
Also, let the critical segment $\cseg$ have a critical point of entry $cs$.
We first show that a good approximation to the path from
$\cseg$ to $w$ can be found 
by showing that sibling rays are correctly maintained.
In fact, we need to find the sibling rays closest to 
vertex $w$ when the rays
from $\cseg$ strikes edges incident to $w$. 
Note that initially, two rays from  the end points of $\cseg$ are 
generated onto face $f$
correctly in procedure  {\it initiateCriticalSources} (Lemma \ref{lemma:initiateCriticalSource}), as the sibling rays. 
Subsequently, as these rays
strike edges of the faces they are refracted and siblings updated.. 
If the rays strike different edges $e_1$ and $e_2$, $e_1 \cap e_2 = w$, after
striking an edge $e'$ of a face comprising edges $e', e_1$ and $e_2$,
the appropriate siblings are computed
in procedure {\it findCriticallyReflectedRay}.
Also the shortest distance path to $w$ starting at the critical segment
$\cseg$ is computed. This path has a first segment that 
is parallel to the sibling rays, $r'$ and $r''$, that are 
initially generated in 
procedure {\it initiateCriticalSource} and maintained during the progress of
the propagation. Furthermore it has its 
origin, $p=origin(r)$ located in-between $origin(r')$ and $origin(r'')$.
Such a path is found in Step 1 of
procedure {\it findCriticallyReflectedRay}.
If the sibling rays do not split but strike one edge, say $e_1$, the
distance via the closest sibling to $w$ is computed as required
in lemma~\ref{thm:weisprefrcritical}.

Finally, a Type-1 path between $u$ and $cs$ is found 
correctly by the algorithm as 
specified in lemma~\ref{type-1find}
\end{proof}

\ignore{
\begin{theorem}
The algorithm computes a shortest path which is an $\epsilon$-approximation to an exact shortest path from $s$ to $t$ 
\end{theorem}
\begin{proof}
This follows from Lemmas \ref{thm:weisprefrnoncritical}, \ref{thm:type1andtype2}, and \ref{thm:weisprefrcritical-find}.
For any point $p$ located in the region with the weight $\alpha$, consider two rays $r', r''$ adjacent to $p$ such that $r', r''$ originated from the same source. 
Then for two points $p_1, p_2$, chosen on rays $r', r''$ respectively such that the line segment $p_1p_2$ passes through $p$ and $p_1, p_2$ are at the same weighted distance $d$ from $s$. 
As proved in Lemmas \ref{thm:weisprefrnoncritical} and \ref{thm:weisprefrcritical}, $\alpha \Vert p_1p_2 \Vert$ is at most $d\epsilon$.
Hence, the weighted distance from $s$ to $p$ is at most $d(1+\epsilon)$.
\end{proof}
}

\begin{theorem}
\label{} 
The algorithm computes an $\epsilon$-approximation to an exact shortest path from $s$ to $t$ 
in $O(n^6\lg(\frac{K}{\epsilon'})+n^4\lg(\frac{W}{w\epsilon'}))$ time complexity.
\end{theorem}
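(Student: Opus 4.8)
The plan is to prove the theorem in two stages: first the $\epsilon$-approximation guarantee, then the time bound, handling the latter as (number of events) $\times$ (cost per event).

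For correctness I would begin with the path decomposition of Lemma~\ref{lem:onecritinbetween}, which splits a shortest $s$--$t$ path into an alternating sequence of vertices and edge sequences so that every maximal subpath between two consecutive vertices is either Type-1 (refractions only) or Type-2 (refractions plus a single critical reflection). Lemmas~\ref{thm:weisprefrnoncritical} and~\ref{type-1find} guarantee that each Type-1 subpath from a source $u$ to a vertex $v$ is $\epsilon$-approximated by the sibling rays the algorithm maintains in $\cTray(u)$, and Lemmas~\ref{thm:weisprefrcritical} and~\ref{thm:weisprefrcritical-find} do the same for Type-2 subpaths using the rays generated along the critical segment in $\cK$. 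Since the continuous-Dijkstra loop relaxes the distance estimate at every vertex and propagates a fresh discretized wavefront from each discovered vertex source, the concatenation of approximated subpaths has its weighted length inflated by a factor that accumulates over the $O(n)$ vertices on the path; with the choice $\epsilon' = \frac{\epsilon}{n(n^2+LW)}$ fixed in Lemma~\ref{thm:weisprefrnoncritical}, this accumulated error stays below $d_w(s,t)\,\epsilon$, yielding the $(1+\epsilon)$-bound at the terminating vertex event $t$.

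For the running time, an event is the interaction of a traced sibling ray with an edge of $\PS$, and I would bound the total number of such interactions by $O(n^4)$, matching the shortest-path-map event count noted in the introduction. The ingredients are: $O(n)$ vertex sources, each edge carrying $O(n)$ critical points of entry (Lemma~\ref{lem:critsources}), and an edge sequence traversed between consecutive vertices or critical points having length $O(n^2)$ (Lemma~\ref{lem:edgeseqlen}); because the algorithm advances only sibling pairs rather than the full fan $\cRay(u)$, these combine to cap the number of traced ray--edge interactions at $O(n^4)$. The cost of handling one event is dominated by the binary searches in {\it findSplitRays}, {\it findCriticalPointOfEntry}, and {\it initiateVertexSource}, each of which locates the correct sibling within the angularly sorted set $\cRay(u)$. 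By the angular spacing bound of Lemma~\ref{thm:weisprefrnoncritical}, successive rays differ by at most $\frac{w\epsilon'}{2W}\bigl(\frac{\epsilon'}{K}\bigr)^{n^2}$, so $|\cRay(u)| = O\bigl(\frac{W}{w\epsilon'}(\frac{K}{\epsilon'})^{n^2}\bigr)$ and each binary search costs $O\bigl(\lg\frac{W}{w\epsilon'} + n^2\lg\frac{K}{\epsilon'}\bigr)$, which also dominates the $O(\lg(\text{heap size}))$ priority-queue operations.

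Multiplying the $O(n^4)$ events by the per-event cost $O\bigl(n^2\lg\frac{K}{\epsilon'} + \lg\frac{W}{w\epsilon'}\bigr)$ gives the claimed $O\bigl(n^6\lg\frac{K}{\epsilon'} + n^4\lg\frac{W}{w\epsilon'}\bigr)$ bound. The main obstacle I expect is making the event count airtight: I must argue that restricting propagation to sibling rays, together with the Steiner rays introduced at critical points and the parallel rays reflected from critical segments, neither loses a path needed in the correctness argument nor lets the number of distinct traced bundles grow beyond $O(n^4)$ as bundles repeatedly split in {\it findSplitRays} and spawn new sources in {\it initiateCriticalSource}. Establishing this requires charging each split to a subdivision vertex lying between the two siblings, invoking Lemmas~\ref{lem:edgeseqlen} and~\ref{lem:critsources} to cap both the edge-sequence length and the number of critical sources, and verifying that the recursion over critical ancestor paths terminates within the stated budget.
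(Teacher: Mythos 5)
Your correctness half mirrors the paper's argument (decompose the optimal path into Type-1/Type-2 subpaths via Lemma~\ref{lem:onecritinbetween}, invoke Lemmas~\ref{thm:weisprefrnoncritical}--\ref{thm:weisprefrcritical-find}, and absorb the $O(n)$-fold accumulation of error into the choice of $\epsilon'$), and that part is fine. The gap is in your time analysis, and it is a real one: your cost model for a binary search is wrong. You price each binary search over $\cRay(u)$ at $O\bigl(\lg\frac{W}{w\epsilon'} + n^2\lg\frac{K}{\epsilon'}\bigr)$, i.e.\ $O(1)$ per probe. But a probe in {\it findSplitRays} or {\it findCriticalPointOfEntry} is not an $O(1)$ comparison: to decide on which side of the target vertex a candidate intermediate ray falls, that ray must be traced forward through its edge sequence, which by Lemma~\ref{lem:edgeseqlen} has length $O(n^2)$. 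The paper prices each search step at $O(n^2)$ accordingly, so one invocation costs $O\bigl(n^2\lg\frac{W}{w\epsilon'} + n^4\lg\frac{K}{\epsilon'}\bigr)$.

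Your accounting nevertheless lands on the stated bound only because a second error cancels the first. You charge a binary search to every one of your $O(n^4)$ ray--edge events; the paper instead performs such a search only when a sibling pair splits around a vertex, and charges these invocations to pairs of vertices, giving $O(n^2)$ searches in total (the $O(n^4)$ refraction events cost $O(1)$ each and are lower order). With the correct per-search cost, your ``search at every event'' scheme would give $O\bigl(n^4\bigr)\times O\bigl(n^2\lg\frac{W}{w\epsilon'}+n^4\lg\frac{K}{\epsilon'}\bigr) = O\bigl(n^6\lg\frac{W}{w\epsilon'}+n^8\lg\frac{K}{\epsilon'}\bigr)$, overshooting the theorem. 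So to repair the proof you must do exactly what your final paragraph gestures at but does not carry out: charge each split (hence each binary search) to the subdivision vertex pair separating the siblings, bounding the number of searches by $O(n^2)$, and then pay the $O(n^2)$ ray-tracing cost inside each search. Your two headline quantities --- event count and per-event cost --- are each off by a factor of $n^2$, in opposite directions.
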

\vspace*{-.10in}
\begin{proof}
The correctness follows from the previous two lemmas
and by splitting the optimal path from $s$ to $t$ into parts, each of which can be classified as either a Type 1 or a Type 2 path.

We bound the time complexity by  
\begin{enumerate}

\item 
Computing the time required to determine sibling rays, i.e. rays that are traced by the algorithm in {\it  findSplitRays}.

\item
Computing the time required in {\it findCriticalPointofEntry}.
\end{enumerate}
The other procedures called in {\it weightedEuclideanSP} will be shown to have lower order of complexity.
The introduction of sibling rays is done in procedures {\it findSplitRays} and  {\it findCriticallyReflectedRay}.  When the algorithm determines new siblings, it is required to do binary search over $\cRay (v) , v \in C$.
The total number of rays in $\cRay (v) , v \in {\cal C}$ are $O(\frac{W}{w\epsilon'}(\frac{K}{\epsilon'})^{n^2})$.
It takes $O(\lg(\frac{W}{w\epsilon'}(\frac{K}{\epsilon'})^{n^2}))$ steps to do binary search over these rays where each step involves tracing the ray forward and computing the intersection with the  edge sequence to determine the point at which the ray strikes one of the edges $e_i$ or edge $e_j$.   
Since an edge can appear in an edge sequence $O(n)$ times (Lemma \ref{lem:edgeseqlen}), the length of an edge sequence is $O(n^2)$.  
This also implies that the size of the sets $R_1$ and $R_2$ (which are defined in procedure {\it findSplitRays}), are $O(n^2)$. 
Consider the critical ancestor path $P$ such that the origin of $P$ is $v$ and $P$ comprises the maximum possible number of critical points. 
Let $S$ be the set consisting of all the critical points of entry in path $P$ together with $v$.
Note that if we do a binary search over the set of rays in $\cRay(v') , v' \in S$, we need not  do binary search over $\cRay(v'') , v'' \in S$. 
It then remains to determine the vertex $v' \in S$ where binary search is required.
This can be done by considering the vertices on the critical ancestor paths, which are of size $O(n^2)$.
Therefore, the work involved for each invocation of {\it findSplitRays} is $O(n^2\lg(\frac{W}{w\epsilon'}(\frac{K}{\epsilon'})^{n^2}))$ and since this may be required for all pairs of vertices in $\PS$, the total work is $O(n^6\lg(\frac{K}{\epsilon'})+n^4\lg(\frac{W}{w\epsilon'}))$.

With similar reasoning, the total work due to binary searches in the procedure {\it findCriticalPointOfEntry} takes $O(n^6\lg(\frac{K}{\epsilon'})+n^4\lg(\frac{W}{w\epsilon'}))$ time.

We now consider the time required for other procedures.
The main algorithm also invokes the procedures {\it initiateVertexSource}, {\it initiateCriticalSource} and  {\it refractRay}.
The first two procedures use binary search to find sibling rays. 
This is done only $O(1)$ times for every procedure call. 
And these procedure are called $O(n^3)$ times since there are $O(n^2)$ critical points of entry (Lemma \ref{lem:critsources}) and  there can be $O(n^3)$ possible combinations of vertex and critical points of entry.
Next consider the time for procedure {\it refractRay}. 
Each invocation of this procedure results in processing a ray and its sibling.
The complexity of the calls to procedure {\it findSplitRays} during the procedure is detailed above. 
For the complexity of the remainder operations we note that the procedure {\it refractRay} requires $O(1)$ work when called for each strike of sibling rays onto an edge of a face.
The number of sibling rays is bounded by $O(n)$ for each source $v$ since a pair of sibling rays can be associated with a vertex which caused the set of rays in $\cRay(v)$ to split into two. 
This bounds the number of calls to {\it refractRay} to $O(n^4)$.
Similarly, since there are $O(n^2)$ critical points of entry (Lemma \ref{lem:critsources}), 
there can 
be $O(n^3)$ possible combinations of vertex and critical points of entry such that it is 
required to find the sibling rays in procedure {\it findCriticallyReflectedRay}.
Hence the procedure {\it Find-Hit-At-Angle} is invoked $O(n^3)$ times, whereas 
each invocation takes $O(n^2)$ time i.e., the total time spent in 
procedure {\it Find-Hit-At-Angle} 
during the entire algorithm is $O(n^5)$.
Since an edge sequence is of length $O(n^2)$, the work involved in tracing the ray incident to a vertex from a critical segment takes $O(n^2)$ time (line 2 of procedure {\it findCriticallyReflectedRay}).
As there are $O(n)$ vertices and $O(n^2)$ critical segments (Lemma \ref{lem:critsources}), the total work involved takes $O(n^5)$, which is subsumed by 
the complexity of other procedures.
\end{proof}

\section{Conclusions}
\label{sect:weispconclu}
This paper presents an algorithm with $O(n^6\lg(\frac{K}{\epsilon'})+n^4\lg(\frac{W}{w\epsilon'}))$ time complexity to find the shortest path from $s$ to $t$ in a weighted subdivision.
This algorithm is of better time complexity than the existing polynomial time algorithm and 
it is better than the pseudo-polynomial time algorithms 
whenever $\frac{N^2}{\sqrt{\epsilon}} > n^5$.

\bibliography{weisp-rinkulu10}

\begin{thebibliography}{1}

\bibitem{Agarwal02}
P.~Agarwal, S.~Har-Peled, and M.~Karia.
\newblock {Computing Approximate Shortest Paths on Convex Polytopes}.
\newblock {\em Algorithmica}, 33(2):{227--242}, 2002.

\bibitem{Aleks00}
L.~Aleksandrov, A.~Maheshwari, and J.-R. Sack.
\newblock {Approximation Algorithms for Geometric Shortest Path Problems}.
\newblock In {\em {Proceedings of the ACM Symposium on Theory of Computing}},
  pages 286--295, 2000.

\bibitem{Aleks05}
L.~Aleksandrov, A.~Maheshwari, and J.-R. Sack.
\newblock {Determining Approximate Shortest Paths on Weighted Polyhedral
  Surfaces}.
\newblock {\em Journal of the ACM}, 52(1):25--53, 2005.

\bibitem{Cheng08}
S.-W. Cheng, H.-S. Na, A.~Vigneron, and Y.~Wang.
\newblock {Approximate Shortest Paths in Anisotropic Regions}.
\newblock {\em SIAM Journal on Computing}, 38(3):802--824, 2008.

\bibitem{Lanthier01}
M.~Lanthier, A.~Maheshwari, and J.-R. Sack.
\newblock {Approximating Weighted Shortest Paths on Polyhedral Surfaces}.
\newblock {\em Algorithmica}, 30(4):527--562, 2001.

\bibitem{Mata97}
C.~Mata and J.~Mitchell.
\newblock {A New Algorithm for Computing Shortest Paths in Weighted Planar
  Subdivisions (Extended Abstract)}.
\newblock In {\em Proceedings of the ACM Symposium on Computational Geometry},
  pages 264--273, 1997.

\bibitem{Mitchell91}
J.S.B. Mitchell and Christos~H. Papadimitriou.
\newblock {The Weighted Region Problem: Finding Shortest Paths Through a
  Weighted Planar Subdivision}.
\newblock {\em Journal of the ACM}, 38(1):18--73, 1991.

\bibitem{Sun01}
Z.~Sun and J.~H. Reif.
\newblock {BUSHWHACK: An Approximation Algorithm for Minimal Paths Through
  Pseudo-Euclidean Spaces}.
\newblock In {\em Proceedings of the International Symposium on Algorithms and
  Computation}, pages 160--171, 2001.

\bibitem{Reif00}
Z.~Sun and J.~H. Reif.
\newblock {On finding approximate optimal paths in weighted regions}.
\newblock {\em Journal of Algorithms}, 58(1):1--32, 2006.

\end{thebibliography}

\pagebreak

\section{Appendix}
\label{sect:appendix}
\begin{algorithm}
\caption{wrapperToFindHitAtAngle($r', r'', e_i, v$)}
\begin{algorithmic}[1]
\REQUIRE{
Let the critical angle associated with edge $e_i$ be $\theta_c$.
The ray $r'$ incidents at point $p'$ located on the edge $e_i=f_i \cap f_{i+1}$ with an angle less than or equal to $\theta_c$ and the ray $r''$ incidents at point $p''$ located on the edge $e_i$ with an angle greater than or equal to $\theta_c$, for $origin(r') = origin(r'') = v$.  
See Fig. \ref{fig:critangray}.
}
\ENSURE{This is a wrapper to the procedure {\it Find-Hit-At-Angle}.} 
\vspace*{.1in}
\IF{$r_1.sourcetype$ != critseg}
	\STATE Define the interval $I$ as $[p', p'']$.
	\STATE $cs \leftarrow$ Find-Hit-At-Angle($\theta_c, I, e_i, v, r'$)
	\STATE 
        Let $r_c$ be the ray causing $cs$.  
        Set $r_1, r_c$ as siblings.
	Let $\theta$ be the angle at which $r'$ refracts from $p'$, and, let $seg$ be the critical segment due to $cs$.  
        Push new event corresponding to the critical point of entry $cs$ into the heap,
 with $seg, \theta, \theta_c$ as  associated  data.
\ENDIF
\end{algorithmic}
\end{algorithm}

\begin{algorithm}
\caption{Find-Hit-At-Angle($\theta_c, I, e_i, v, r'$) (procedure from \cite{Mitchell91})}
\begin{algorithmic}[1]
\REQUIRE{an interval $I$, critical angle of reflection $\theta_c$, edge $e_i$ such that $e_i \cap I = I$, ray $r'$ which intersects an endpoint of $I$, point $v$ which is the origin of sibling rays incident to endpoints of $I$}
\ENSURE{Find the point in $I$ (if there is one) such that the refraction path from the root of $I$ through the last edge sequence will hit the point at the angle of incidence $\theta$.
If the function succeeds in finding the desired path, it returns the critical point of entry in the interval $I$;
else, returns NIL.
}
\vspace*{.1in}
\STATE 
Starting with an angle of incidence $\theta_c$ at edge $e_i$, determine the angles of incidence at each edge in the last edge sequence of $r'$.
This can be done by following back pointers to predecessors of intervals and using the fact that the angle of incidence at $e_i$ uniquely specifies the angle of refraction from $e_{i-1}$, which in turn uniquely specifies the angle of incidence at $e_{i-1}$.
It may be impossible at some stage to achieve the desired angle of refraction, due to the constraint that the angle of incidence be less (in absolute value) than the critical angle; in this case, we stop and return NIL.
This process will eventually give the angle $\beta$ such that a refraction path that starts from $v$ at an angle $\beta$ will refract through the last edge sequence or $r'$ and strike $e_i$ at an angle of incidence $\theta_c$.
\STATE 
Now do ray tracing from $v$ at an angle $\beta$, following the last edge sequence or $r'$.
If the refraction path leaves the sequence, then stop and return NIL.
Otherwise, the refraction path will eventually cross edge $e_i$ at some point $x$ (and we know by construction that it will strike at point $x$ with the desired angle of incidence $\theta_c$).
If $x \in I$, then return $x$; 
otherwise, return NIL.
\end{algorithmic}
\end{algorithm}

\end{document}